\newtheorem{lemma}{Lemma}
\newtheorem{assumption}{Assumption}
\newtheorem{remark}{Remark}
\newtheorem{theorem}{Theorem}
\newcommand{\RomanNumeralCaps}[1]
\title[Non-decaying, non-periodic vortex sheets]{The velocity field and Birkhoff-Rott integral for non-decaying, non-periodic vortex sheets}
\author{David M. Ambrose}
\address{Department of Mathematics, Drexel University, Philadelphia, PA 19104 USA}
\email{dma68@drexel.edu}
\begin{document}

\begin{abstract}
The Birkhoff-Rott integral 
expresses the fluid velocity on a vortex sheet.  This integral
converges if certain quantities decay at horizontal infinity, but can also be summed over periodic images
in the horizontally periodic case.  However, non-decaying, non-periodic cases are also of interest,
such as the interaction of periodic wavetrains with non-commensurate periods (i.e. spatially quasiperiodic solutions), 
or non-periodic disturbances to periodic wavetrains.
We  therefore develop a more general single formula for the Birkhoff-Rott integral, which 
unifies and extends the cases of decay and periodicity.
We verify that under some reasonable conditions 
this new version of the Birkhoff-Rott integral is the restriction to the vortex sheet of an incompressible,
irrotational velocity field, with continuous normal component but with a jump in tangential velocity across the 
vortex sheet.  We  give a number of examples of non-decaying, non-periodic sheet positions and sheet strengths
for which our assumptions may be verified.
While we develop this in the case of two-dimensional fluids, the methodology applies equally
well to three-dimensional fluids.
\end{abstract}

\maketitle

\section{Introduction}

A vortex sheet is a surface of discontinuity of velocity in a fluid; in an incompressible fluid, the normal component of velocity across
the sheet must be continuous, but there can be a jump in the tangential component of velocity.  Even with the fluids being irrotational,
this velocity discontinuity, when taking the curl, will lead to vorticity concentrated on the sheet.  
If we approach this from the other direction, then the vortex sheet ansatz
(in the two-dimensional case) is that a fluid has its vorticity, $\omega,$ concentrated along a curve $\mathcal{C},$ parameterized as $(x(\alpha),y(\alpha)).$ 
This vorticity has some amplitude, $\gamma(\alpha),$ known as the vortex sheet strength, which is multiplied by the Dirac mass along the curve.  
That is,
\begin{equation}\nonumber
\omega=\gamma\delta_{\mathcal{C}}.
\end{equation}
This anszatz for the vorticity can then be used with the Biot-Savart law to recover velocity from vorticity; this yields the fluid velocity $(u,v)$ at a
point $\mathbf{x}=(x,y)$ away from the curve $\mathcal{C},$
\begin{equation}\label{velocityOriginal}
(u,v)(\mathbf{x})=\frac{1}{2\pi}\int_{\mathbb{R}}\gamma(\alpha')\frac{(-(y-y(\alpha')),x-x(\alpha'))}{(x-x(\alpha'))^{2}+(y-y(\alpha'))^{2}}\ \mathrm{d}\alpha'.
\end{equation}
We define the Birkhoff-Rott integral, $\mathbf{W}=(W_{1},W_{2}),$ 
by taking the point $\mathbf{x}=(x(\alpha),y(\alpha))$ for some value of $\alpha,$ and interpreting
the integral in the principal value sense:
\begin{equation}\label{BROriginal}
\mathbf{W}(\alpha)=\frac{1}{2\pi}\mathrm{PV}\int_{\mathbb{R}}\gamma(\alpha')
\frac{(-(y(\alpha)-y(\alpha')),x(\alpha)-x(\alpha'))}{((x(\alpha)-x(\alpha'))^{2}+(y(\alpha)-y(\alpha'))^{2}}
\ \mathrm{d}\alpha'.
\end{equation}
The Birkhoff-Rott integral was developed by Birkhoff and Rott in \cite{birkhoff} and \cite{rott}, resepectively, and a
 detailed development of the Birkhoff-Rott integral for two-dimensional fluids may be found, for instance, in the book of \cite{saffmanBook},
or in the book of \cite{majdaBertozzi}.  
Our primary concern in the present work is the question of how to express the velocity field associated to a vortex sheet given by a functions $\gamma$ and 
a curve $\mathcal{C}$ for which \eqref{velocityOriginal} and \eqref{BROriginal} does not obviously make sense. 
These integrals are used either in the spatially decaying case (in which the curve $\mathcal{C}$
is asymptotic at positive and negative infinity to the real line, and in which $\gamma$ decays at positive and negative infinity), and in the spatially 
periodic case; however, there are other settings of interest for vortex sheets, beyond the spatially decaying and spatially periodic cases.  

One clear application
of developing a more general theory would be the ability to study non-periodic perturbations
of periodic wavetrains (and vice versa), or more generally, to study solutions which have both periodic and decaying elements.  As an example
of such a problem, there are known cases of the existence of solitary water waves with capillary ripples, in which the
solution is neither decaying nor periodic (\cite{beale}, \cite{sun}).  The development of a theory which requires neither decay nor periodicity could allow
quite general perturbations of traveling waves to be studied, such as noisy perturbations in which the perturbation is essentially unstructured.

Considering structure other than periodicity, Wilkening and Zhao have carried out detailed numerical studies of irrotational water 
waves (a special case of the vortex sheet) in the spatially quasiperiodic case (\cite{wilkeningZhao1}, \cite{wilkeningZhao2}, 
\cite{wilkeningZhao3}, \cite{wilkeningZhao4}).  These works use the Dirichlet-to-Neumann operator (\cite{craigSulem}) to describe
the fluid velocity rather than the Birkhoff-Rott integral; this is effective in the case that the free fluid surface remains a graph with
respect to the horizontal, or in the two-dimensional case when a conformal mapping may be used.  
See also \cite{dyachenkoSemenova} for further work on spatially quasiperiodic water waves.
We seek instead to be able
to formulate such a problem using a parameterized curve and the Birkhoff-Rott integral, as this generalizes quite naturally to the
three-dimensional setting.  Another non-decaying, non-periodic study of water waves was by Alazard, Burq, and Zuily, where the
initial value problem for irrotational water waves was studied with initial data which is uniformly locally bounded (\cite{alazardEtAl}).
Again, the Dirichlet-to-Neumann operator was used, with a restriction that the free fluid surface is a graph with respect to the
horizontal.  

The main thrust of the present work is to write new expressions for the velocity field $(u,v)$ and the Birkhoff-Rott integral $\mathbf{W};$ 
beyond simply rewriting these formulas, we will give conditions on the
curve $\mathcal{C}$ and on the vortex sheet strength $\gamma$ under which the new integrals converge.  We will then show
that these conditions are satisfied in the usual decaying and spatially periodic cases, and we will be able to see that in these cases 
the new formulas agree with the familiar formulas.

Before proceeding, it is convenient to introduce some complex notation.  If we denote the Birkhoff-Rott integral as $\mathbf{W}=(W_{1},W_{2}),$ 
and if we write $z=x+iy$ and $W=W_{1}+iW_{2},$ and we denote complex
conjugation with $*,$ i.e. $W^{*}=W_{1}-iW_{2},$ then we may write the Birkhoff-Rott integral more succinctly as 
\begin{equation}\label{BRLineComplex}
W^{*}(\alpha)=\frac{1}{2\pi i}\mathrm{PV}\int_{\mathbb{R}}\frac{\gamma(\alpha')}{\xi(\alpha)-\xi(\alpha')}\ \mathrm{d}\alpha'.
\end{equation}
Similarly, we may write the complexified version of \eqref{velocityOriginal} as
\begin{equation}\label{velocityComplex}
u-iv=\frac{1}{2\pi i}\int_{\mathbb{R}}\frac{\gamma(\alpha')}{z-\xi(\alpha')}\ \mathrm{d}\alpha'.
\end{equation}

We define the spatially periodic case to be such that
\begin{equation}\nonumber
x(\alpha+2\pi)=x(\alpha)+2\pi,\qquad y(\alpha+2\pi)=y(\alpha),\qquad \gamma(\alpha+2\pi)=\gamma(\alpha).
\end{equation}
In this case the Birkhoff-Rott integral \eqref{BRLineComplex} may be summed over periodic images:
\begin{multline}\nonumber
W^{*}(\alpha)=\frac{1}{2\pi i}\sum_{k\in\mathbb{Z}}\mathrm{PV}\int_{2\pi k}^{2\pi k + 2\pi}
\frac{\gamma(\alpha')}{\xi(\alpha)-\xi(\alpha')}
\ \mathrm{d}\alpha'
\\
=\frac{1}{2\pi i}\mathrm{PV}\int_{0}^{2\pi}\gamma(\alpha')
\sum_{k\in\mathbb{Z}}\frac{1}{\xi(\alpha)-\xi(\alpha')-2\pi k}\ \mathrm{d}\alpha'.
\end{multline}
So far we have only used complex notation rather than complex analysis.  Now, however, by a 
complex analysis theorem of 
Mittag-Leffler (see \cite{ablowitzFokas}), this sum may be evaluated in closed form.  This gives the formula
\begin{equation}\label{BRPeriodic}
W^{*}(\alpha)=\frac{1}{4\pi i}\mathrm{PV}\int_{0}^{2\pi}\gamma(\alpha')\mathrm{cot}\left(\frac{1}{2}
(\xi(\alpha)-\xi(\alpha'))\right)\ \mathrm{d}\alpha'.
\end{equation}
We denote the circle, or the one-dimensional torus, as $\mathbb{T}=[0,2\pi]$ with periodic boundary conditions,
so \eqref{BRPeriodic} holds for $\alpha\in\mathbb{T}.$
We mention that this summation using the Mittag-Leffler formula must be performed carefully, as the 
infinite sum in question does not converge absolutely.  The summation must be performed symmetrically
in pairs about the $\alpha=\alpha'$ singularity, which is the same as saying the principal value must be taken
at plus and minus infinity.  We will discuss such principal values at infinity more below.
Even though these expressions exist in the periodic case, a goal of the present work is to be able to formulate the velocity integral
and the Birkhoff-Rott integral in a way that does not require special treatment for spatially periodic vortex sheets.
The formulas we will develop will unify the decaying and periodic cases, and apply more generally.  These integrals will therefore
converge in the periodic case without needing to carry out the above summation.

The main idea we utilize in generalizing the classical Birkhoff-Rott integral to more general (non-decaying, non-periodic) vortex sheets
comes from the work of P. Serfati.  In \cite{serfati}, he gave a reformulation of the two-dimensional incompressible Euler equations
allowing for velocity and vorticity in the space $L^{\infty}(\mathbb{R}^{2}),$ and a proof of existence of the solutions of the resulting
equations. The reformulation was in rewriting the Biot-Savart law; specifically, a cutoff function was introduced to separate out the
near-field and far-field contributions of the vorticity to the velocity.  In the far-field, the Biot-Savart integral is non-singular, and 
greater decay can be realized by integrating by parts and thus differentiating the integral kernel.  Ideas inspired by this work of
Serfati have subsequently been used by the author and collaborators in works such as \cite{AKLN} and \cite{ACEK}.
While the current setting is more specialized because of the specific form of vorticity arising in a vortex sheet, we again
introduce a cutoff function to separate the velocity integral into near-field and far-field pieces, and we integrate by parts in the
far-field integral to gain greater decay from the kernel.

After introducing this cutoff function and integrating by parts, we give fairly general conditions under which the new velocity integrals
converge.  We then demonstrate several desirable properties for the new velocity formulation, such as that the velocity is independent of the
particular choice of the cutoff function.  This further includes showing that the velocity field is
divergence-free everywhere and is irrotational away from the vortex sheet, and that the normal component of velocity is continuous across the sheet.
We also show that there is the expected jump in tangential velocity across the sheet, and take the limit of the velocity at vertical infinity.  
In this way we verify that the velocity field is indeed
the velocity field for a vortex sheet.  We also will be able to see that in the decaying case and the horizontally periodic case, our conditions for the 
new velocity integrals to converge do in fact hold, so that we may conclude that the new formulation does properly generalize the 
classical setting.

The plan of the paper is as follows.  In Section \ref{examplesSection} we give four motivating examples of positions of the vortex sheet and
vortex sheet strengths which are not treated by the typical Birkhoff-Rott integral.  In Section \ref{assumptionsSection} we give an assumption
on $\gamma$ and an assumption on $\mathcal{C}$ which will allow us to define our new velocity integral.  We also verify that these assumptions
hold in the decaying and periodic cases.  In Section \ref{newVelocitySection} we introduce our cutoff function and integrate by parts, finding our
new expression for the velocity.  In Section \ref{newBirkhoffRottSection} we then give the corresponding new formula for the Birkhoff-Rott integral.
We show in Section \ref{wellDefinedSection} that under the assumptions, the new velocity and new Birkhoff-Rott integral are well-defined.
We then state an additional assumption on approximability of $\xi$ and $\gamma$ in Section \ref{assumptionCSection}.  We state and prove 
the main theorem, that this velocity is the velocity associated to a vortex sheet, in Section \ref{mainTheoremSection}.  We discuss the jump
in normal and tangential components of the velocity in Section \ref{jumpConditionsSection}, and we take the limit of the velocity field at vertical infinity 
in Section \ref{verticalInfinitySection}.  We revisit the examples of Section \ref{examplesSection}
in Section \ref{examplesAgain}, verifying that the various assumptions hold for these four examples.  We close with some remarks in 
Section \ref{discussionSection}.

\section{Examples and assumptions}

We now will begin the work of deriving a new expression for the velocity \eqref{velocityOriginal} and a new expression for the
Birkhoff-Rott integral. 
These new velocity integrals will agree with the previous forms 
in both the decaying case and the periodic case, but will be single formulas (one formula for the velocity in the plane
and one formula for the Birkhoff-Rott integral) which apply not only
in these two cases, but also to non-decaying, non-periodic cases under some reasonable conditions.  
We wish to first give some illustrative examples of such non-decaying, non-periodic curves and vortex sheet strengths to which
our new formulas for the velocity will apply.

\subsection{Four illustrative examples}\label{examplesSection}

It will be helpful to have in mind a few examples of non-decaying, non-periodic vortex sheets for which the new framework will be
useful.  We give here four examples of curves $\mathcal{C}$ and vortex sheet strengths $\gamma.$  We will in subsequent
sections be making assumptions about $\mathcal{C}$ and $\gamma,$ and in Section \ref{examplesAgain} below we will verify that 
these four examples satisfy all of the assumptions to be made.

\begin{description}

\item[(a) A mixed case.]\ As a simple example we could consider a mixed decaying and periodic case.  For instance, the curve
$\mathcal{C}$ could be taken to be periodic, say $\xi(\alpha)=\alpha+i\sin(\alpha),$ while the vortex sheet strength could be
decaying, say $\gamma(\alpha)=\left(\frac{1}{1+\alpha^{2}}\right)^{2/5}.$

\

\item[(b) A quasiperiodic case.]\ As a simple spatially quasiperiodic example, for some amplitude parameter $\mu>0,$ we could
take the curve $\mathcal{C}$ to be specified by its parameterization 
\begin{equation}\nonumber
\xi(\alpha)=(\alpha+\mu\sin(\alpha)+\mu\sin(\pi\alpha))+i\mu\sin\left(\sqrt{2}\alpha\right).
\end{equation}
Here, the curve is non-self-intersecting if the parameter $\mu$ is sufficiently small.
The vortex sheet strength could be given by, for instance, $\gamma(\alpha)=\cos\left(\left(1+4\pi-\sqrt{2}\right)\alpha\right).$

\

\item[(c) A uniformly local case.]\ As a non-periodic, non-decaying example which also does not have the above quasiperiodic structure, we could
take the curve $\mathcal{C}$ to have parameterization
\begin{equation}\nonumber
\xi(\alpha)=\alpha+i\sin\left(\left(\frac{\alpha^{4}}{1+\alpha^{2}}\right)^{1/4}\right).
\end{equation}
We could take any appropriate $\gamma,$ for example $\gamma(\alpha)=\frac{\alpha^{2}}{1+\alpha^{2}}.$
\

\item[(d) A bore.]\ We let the curve $\mathcal{C}$ be described by its parameterization 
$\xi(\alpha)=\alpha+i\tan^{-1}(\alpha),$ so that the curve $\mathcal{C}$ has different 
horizontal limits on the left and on the right.  We could take the vortex sheet strength to be given by $\gamma(\alpha)=1+\sin(\alpha),$ 
for instance.

\end{description}

\subsection{Assumptions on $\gamma$ and $\mathcal{C}$}\label{assumptionsSection}

We will now state an assumption on $\gamma$ and an assumption on the curve $\mathcal{C}$ which
will allow the generalized formulas to be developed.

\begin{assumption}\label{assumptionA}
We say that $\gamma$ satisfies Assumption \ref{assumptionA} if there exists a constant $c$ and an
antiderivative $\Gamma$ of $\gamma-c$ such that for some $\varepsilon>0,$ there exist constants $C>0$ and 
$0\leq\beta<1$ such that for all $\alpha$ with $|\alpha|>\varepsilon,$
\begin{equation}\label{assumptionABound}
\frac{|\Gamma(\alpha)|}{|\alpha|^{\beta}}\leq C.
\end{equation}
\end{assumption}

Assumption \ref{assumptionA} states that the antiderivative of $\gamma$ grows sublinearly, although a constant may be subtracted first.
The periodic case demonstrates the necessity of subtracting a constant; constant functions have linearly growing antiderivative, so if we
want sublinear growth in the antiderivative we must eliminate this constant first.  The need for a sublinearly growing antiderivative will be 
seen below in \eqref{newVelocityForm}, our new expression for the velocity induced by a vortex sheet.

We next see that the constant, $c,$ to be subtracted from $\gamma$ is uniquely determined.

\begin{lemma}\
Let $\gamma$ satisfy Assumption \ref{assumptionA}.  The constant, $c,$ associated to
$\gamma$ and $\Gamma$ is uniquely defined.  We denote the constant as $c=c[\gamma].$
\end{lemma}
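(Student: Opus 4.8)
The plan is to establish uniqueness directly: suppose there are two admissible pairs and show the constants must coincide. Concretely, assume both $(c_1,\Gamma_1)$ and $(c_2,\Gamma_2)$ witness Assumption \ref{assumptionA} for the same $\gamma$; that is, $\Gamma_i$ is an antiderivative of $\gamma-c_i$ and there exist $\varepsilon_i>0$, $C_i>0$, and $0\leq\beta_i<1$ with $|\Gamma_i(\alpha)|\leq C_i|\alpha|^{\beta_i}$ for all $|\alpha|>\varepsilon_i$, for $i\in\{1,2\}$.

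First I would compute the derivative of the difference of the two antiderivatives. Since $\Gamma_i'(\alpha)=\gamma(\alpha)-c_i$, we have
\[
(\Gamma_1-\Gamma_2)'(\alpha)=(\gamma(\alpha)-c_1)-(\gamma(\alpha)-c_2)=c_2-c_1,
\]
a constant. Because $\mathbb{R}$ is connected, integrating yields $\Gamma_1(\alpha)-\Gamma_2(\alpha)=(c_2-c_1)\alpha+D$ for some constant $D$.

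Next I would play the affine growth of the right-hand side against the sublinear growth of the left. For $|\alpha|>\max\{\varepsilon_1,\varepsilon_2\}$, the triangle inequality together with the two bounds gives
\[
|(c_2-c_1)\alpha+D|\leq|\Gamma_1(\alpha)|+|\Gamma_2(\alpha)|\leq C_1|\alpha|^{\beta_1}+C_2|\alpha|^{\beta_2}.
\]
Dividing by $|\alpha|$ and letting $|\alpha|\to\infty$, the right-hand side tends to $0$ since $\beta_1,\beta_2<1$, while the left-hand side tends to $|c_2-c_1|$. Hence $|c_2-c_1|=0$, so $c_1=c_2$, which justifies writing $c=c[\gamma]$.

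There is no serious obstacle here; the argument reduces to the elementary observation that antiderivatives of $\gamma-c_1$ and $\gamma-c_2$ differ by an affine function whose slope is precisely $c_2-c_1$, and that a nonzero slope is incompatible with the sublinear growth imposed on both $\Gamma_i$. The only point deserving a word of care is that the claim concerns uniqueness of $c$ alone, not of $\Gamma$: indeed $\Gamma$ is determined only up to an additive constant (which is exactly why the harmless constant $D$ appears), whereas the linear coefficient, and hence $c$, is pinned down.
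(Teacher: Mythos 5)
Your proof is correct and follows essentially the same route as the paper, which merely remarks that a different choice of constant would force linear growth in the antiderivative; you have simply made that one-line observation rigorous by noting that $\Gamma_1-\Gamma_2$ is affine with slope $c_2-c_1$ and that a nonzero slope contradicts the sublinear bounds. No issues to flag.
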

The proof of this lemma is straightforward, since attempting to use a different choice of constant would cause
linear growth in the antiderivative.

We next see that Assumption \ref{assumptionA} is satisfied in both the decaying and periodic cases.

\begin{lemma} \label{assumptionASatisfiedLemma}
If $\gamma\in L^{2}(\mathbb{R})$ or $\gamma\in L^{2}(\mathbb{T}),$ then Assumption
\ref{assumptionA} is satisfied.
\end{lemma}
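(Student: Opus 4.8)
The plan is to treat the two cases separately, in each case exhibiting an explicit choice of the constant $c$ and of the antiderivative $\Gamma,$ and then verifying the growth bound \eqref{assumptionABound} directly. No limiting or approximation arguments should be needed; the content is entirely in making the right choice of $c.$

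For the decaying case $\gamma\in L^{2}(\mathbb{R}),$ I would take $c=0$ and $\Gamma(\alpha)=\int_{0}^{\alpha}\gamma(s)\ \mathrm{d}s.$ By the Cauchy--Schwarz inequality, for $\alpha>0$ we have $|\Gamma(\alpha)|\leq\left(\int_{0}^{\alpha}1\ \mathrm{d}s\right)^{1/2}\left(\int_{0}^{\alpha}|\gamma(s)|^{2}\ \mathrm{d}s\right)^{1/2}\leq|\alpha|^{1/2}\|\gamma\|_{L^{2}(\mathbb{R})},$ and the same estimate holds for $\alpha<0.$ Thus \eqref{assumptionABound} is satisfied with $\beta=1/2,$ the constant $C=\|\gamma\|_{L^{2}(\mathbb{R})},$ and any $\varepsilon>0.$

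For the periodic case $\gamma\in L^{2}(\mathbb{T}),$ the essential point is to take $c$ to be the mean of $\gamma$ over one period, $c=\frac{1}{2\pi}\int_{0}^{2\pi}\gamma(s)\ \mathrm{d}s,$ so that $\gamma-c$ has zero mean. I then set $\Gamma(\alpha)=\int_{0}^{\alpha}(\gamma(s)-c)\ \mathrm{d}s$ and observe that $\Gamma$ is $2\pi$-periodic: using periodicity of $\gamma,$ $\Gamma(\alpha+2\pi)-\Gamma(\alpha)=\int_{\alpha}^{\alpha+2\pi}(\gamma(s)-c)\ \mathrm{d}s=\int_{0}^{2\pi}\gamma(s)\ \mathrm{d}s-2\pi c=0.$ Since the torus has finite measure, $L^{2}(\mathbb{T})\subset L^{1}(\mathbb{T}),$ so $\Gamma$ is continuous, and a continuous periodic function is bounded, say $|\Gamma(\alpha)|\leq C$ for all $\alpha.$ Hence \eqref{assumptionABound} holds with $\beta=0$ (so that $|\alpha|^{\beta}=1$) and this $C.$

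The only step requiring care — and the one I would regard as the crux rather than a genuine obstacle — is the choice of $c$ in the periodic case: any other value would leave $\gamma-c$ with nonzero mean and therefore produce a linearly growing $\Gamma,$ violating sublinearity. This is precisely why the subtraction of a constant is built into Assumption \ref{assumptionA}, and it is consistent with the uniqueness of $c=c[\gamma]$ established in the preceding lemma. There is no analytic difficulty beyond the elementary Cauchy--Schwarz estimate and the periodicity computation.
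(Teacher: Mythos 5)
Your proof is correct. In the decaying case it is essentially identical to the paper's: $c=0,$ $\Gamma(\alpha)=\int_{0}^{\alpha}\gamma,$ and Cauchy--Schwarz (the paper cites it as H\"older) gives the bound with $\beta=1/2.$ In the periodic case, however, you take a genuinely different and more elementary route. The paper works on the Fourier side: it identifies $c[\gamma]=\hat{\gamma}(0),$ defines $\Gamma$ by the Fourier series $\sum_{k\neq 0}\frac{\hat{\gamma}(k)}{ik}e^{ik\alpha},$ shows via Plancherel that $\Gamma\in L^{2}(\mathbb{T}),$ and then invokes Sobolev embedding to conclude $\Gamma\in L^{\infty}.$ You instead argue in physical space: with $c$ the mean of $\gamma,$ the antiderivative of the mean-zero function $\gamma-c$ is $2\pi$-periodic by a direct computation, continuous since $\gamma-c\in L^{1}(\mathbb{T}),$ and therefore bounded. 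Your version has a small advantage in transparency: the paper's appeal to Sobolev embedding is really an appeal to $H^{1}(\mathbb{T})\subset L^{\infty}(\mathbb{T})$ (valid because $\Gamma_{\alpha}=\gamma-c\in L^{2}$), since $L^{2}$ membership alone would not suffice; your argument never needs this, only absolute continuity of the integral and periodicity. What the paper's Fourier approach buys in exchange is an explicit formula for $\Gamma$ and the identification $c[\gamma]=\hat{\gamma}(0),$ which is notationally convenient later when the constant is referenced. Your closing remark about the necessity of choosing $c$ to be the mean is also consistent with the paper's uniqueness lemma for $c[\gamma].$
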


We prove this lemma in the appendix.

We will be manipulating the classical Birkhoff-Rott integral to give our new version of it, and in so doing, as suggested in the
statement of Assumption \ref{assumptionA}, we will be adding and subtracting $c[\gamma];$ this means one term we must 
deal with is
\begin{equation}\nonumber
\frac{1}{2\pi i}\mathrm{PV}\int_{-\infty}^{\infty}\frac{c[\gamma]}{\xi(\alpha)-\xi(\alpha')}\ \mathrm{d}\alpha'.
\end{equation}
We now state an assumption; when we make this assumption, we will be assuming that this integral makes sense. 
We verify in Lemma \ref{assumptionBLemma} that this assumption
holds in the usual decaying or periodic cases.

\begin{assumption}\label{assumptionB}
We say that the curve $\mathcal{C}$ satisfies Assumption \ref{assumptionB} if the following hold:
\begin{enumerate}
\item for all $\alpha\in\mathbb{R},$
the integral
\begin{equation}\label{assumptionBDisplay}
\mathrm{PV}\int_{-\infty}^{\infty}\frac{1}{\xi(\alpha)-\xi(\alpha')}\ \mathrm{d}\alpha'
\end{equation}
converges in the principal value sense, with the principal value being taken both at $\alpha'=\alpha$ and at $\alpha'=\pm\infty,$
\item for all $z\in\mathbb{C}$ such that $\mathrm{dist}(z,\mathcal{C})>0,$ the principal value integral
\begin{equation}
\mathrm{PV}_{\infty}\int_{-\infty}^{\infty}\frac{1}{z-\xi(\alpha')}\ \mathrm{d}\alpha'
\end{equation}
converges.
\end{enumerate}
\end{assumption}

\begin{remark} We clarify what we mean by the principal value at infinity.  If we are only taking the principal value at infinity, when we write
\begin{equation}\nonumber
\mathrm{PV}_{\infty}\int_{-\infty}^{\infty}g(s)\ \mathrm{d}s,
\end{equation}
this indicates that there exists $a\in\mathbb{R}$ and $M_{n}\nearrow\infty$ such that
\begin{equation}\label{PVInfinity}
\lim_{n\rightarrow\infty}\int_{a-M_{n}}^{a+M_{n}}g(s)\ \mathrm{d}s
\end{equation}
exists, and the principal value integral is equal to the value of this limit; furthermore, if there are multiple choices of $a$ and the sequence $M_{n},$ 
then the principal value at infinity is only well-defined if all such choices will give the same limit in \eqref{PVInfinity}. 
We will sometimes identify the particular $a\in\mathbb{R}$ by
writing, for example,
\begin{equation}\nonumber
\mathrm{PV}_{\infty}\left(\int_{-\infty}^{a}g(s)\ \mathrm{d}s+\int_{a}^{\infty}g(s)\ \mathrm{d}s\right).
\end{equation}
We also could exclude a portion of the domain and write
\begin{equation}\nonumber
\mathrm{PV}\left(\int_{-\infty}^{a}g(s)\ \mathrm{d}s+\int_{b}^{\infty}g(s)\ \mathrm{d}s\right).
\end{equation}
This indicates that we take the similar limit,
\begin{equation}\nonumber
\lim_{n\rightarrow\infty}\int_{a-M_{n}}^{a}g(s)\ \mathrm{d}s + \int_{b}^{b+M_{n}}g(s)\ \mathrm{d}s.
\end{equation}
If we take the principal value at both a finite value and also at infinity, we simply write $\mathrm{PV}$ (as in \eqref{assumptionBDisplay}).
\end{remark}

It is frequently the case when considering the curve $\mathcal{C}$ that we need it to be non-self-intersecting.  We typically accomplish
this by imposing the chord-arc condition, as in \cite{wu2D}.  This condition is
\begin{equation}\label{chordArc}
\exists K>0 \ \mathrm{such}\ \mathrm{that}\ \forall \alpha,\alpha',
\quad \left|\frac{\xi(\alpha)-\xi(\alpha')}{\alpha-\alpha'}\right| \geq K.
\end{equation}

\begin{lemma}\label{assumptionBLemma} 
If $\mathcal{C}$ satisfies the chord-arc condition \eqref{chordArc} and $\xi(\alpha)-\alpha$
is twice continuously differentiable and bounded, and is either in $L^{2}(\mathbb{R})$ or $L^{2}(\mathbb{T}),$ 
and if there exists $M>0$ such that for all $\alpha\in\mathbb{R},$
$\frac{1}{M}\leq|\xi_{\alpha}(\alpha)|\leq M,$ 
then Assumption \ref{assumptionB} is satisfied.  
\end{lemma}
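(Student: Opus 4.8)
Write $\phi(\alpha)=\xi(\alpha)-\alpha$, so that $\phi$ is $C^2$, bounded, and (in the periodic case) $2\pi$-periodic. The hypotheses give the chord-arc bound $|\xi(\alpha)-\xi(\alpha')|\ge K|\alpha-\alpha'|$ together with $\frac1M\le|\xi_\alpha|\le M$. My plan is to verify the two items of Assumption \ref{assumptionB} by splitting each integral into a near-field piece, where the only issue is the finite singularity, and a far-field piece, where the only issue is convergence at $\pm\infty$, using a different subtraction in each region. The same argument covers both the $L^2(\mathbb{R})$ and $L^2(\mathbb{T})$ cases at once, since it relies only on the boundedness and $C^2$ regularity of $\phi$ and on the chord-arc bound (which in particular forces injectivity of $\xi$, so that $\alpha'=\alpha$ is the only singularity of the kernel in item (1)).

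For item (1), fix $\alpha$ and split the domain at $\alpha\pm1$. On the near region $|\alpha'-\alpha|\le1$ I would extract the leading singularity by writing $\frac{1}{\xi(\alpha)-\xi(\alpha')}=\frac{1}{\xi_\alpha(\alpha)(\alpha-\alpha')}+R(\alpha,\alpha')$; a second-order Taylor expansion of $\xi$ about $\alpha$ (valid since $\xi\in C^2$ and $\xi_\alpha(\alpha)\ne0$) shows that $R$ has a finite limit at $\alpha'=\alpha$ and is therefore bounded and continuous on the near region. The principal value of the first term over the symmetric interval vanishes by oddness, and the remainder is integrable, so the near-field principal value at $\alpha'=\alpha$ exists. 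On the far region $|\alpha'-\alpha|>1$ there is no singularity, and I would instead write $\frac{1}{\xi(\alpha)-\xi(\alpha')}=\frac{1}{\alpha-\alpha'}+S(\alpha,\alpha')$, where $S=\frac{-(\phi(\alpha)-\phi(\alpha'))}{(\xi(\alpha)-\xi(\alpha'))(\alpha-\alpha')}$ satisfies $|S|\le\frac{2\|\phi\|_\infty}{K|\alpha-\alpha'|^2}$ and is thus absolutely integrable there. A direct computation shows $\mathrm{PV}_\infty\int_{|\alpha'-\alpha|>1}\frac{\mathrm{d}\alpha'}{\alpha-\alpha'}=0$, independently of the centering $a$ and the sequence $M_n$, so the far-field principal value at infinity exists. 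Adding the two pieces gives the double principal value in \eqref{assumptionBDisplay}.

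For item (2), fix $z$ with $\mathrm{dist}(z,\mathcal{C})=d>0$. Now the integrand $\frac{1}{z-\xi(\alpha')}$ is continuous with $\left|\frac{1}{z-\xi(\alpha')}\right|\le \frac1d$, so there is no finite singularity and only convergence at infinity is at stake. I would fix $R$ large enough that $|\xi(\alpha')|\ge|\alpha'|-\|\phi\|_\infty\ge\frac12|\alpha'|+|z|$ for $|\alpha'|\ge R$, treat $\int_{-R}^{R}$ as a convergent finite integral, and on $|\alpha'|\ge R$ subtract the leading term $-\frac1{\alpha'}$: the remainder $T=\frac{1}{z-\xi(\alpha')}+\frac1{\alpha'}=\frac{z-\phi(\alpha')}{\alpha'(z-\xi(\alpha'))}$ is $O(1/|\alpha'|^2)$ and hence absolutely integrable, while $\mathrm{PV}_\infty\int_{|\alpha'|\ge R}\left(-\frac{1}{\alpha'}\right)\mathrm{d}\alpha'=0$, again independently of $a$ and $M_n$. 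This establishes item (2).

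The main obstacle is the interplay of the two principal values: no single global subtraction simultaneously tames the $\alpha'=\alpha$ singularity and the slow $\frac{1}{\alpha-\alpha'}$ decay at infinity, since extracting the exact pole $\frac{1}{\xi_\alpha(\alpha)(\alpha-\alpha')}$ leaves an $O(1/|\alpha'|)$ tail, whereas extracting $\frac{1}{\alpha-\alpha'}$ leaves a residual pole at $\alpha'=\alpha$. The near/far decomposition resolves this, and the only genuinely delicate checks are that the extracted odd kernels contribute zero in the principal value and that these contributions are independent of the choice of $a$ and $M_n$, as required by the definition of $\mathrm{PV}_\infty$.
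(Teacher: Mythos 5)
Your proof is correct, and its skeleton is the same as the paper's: split each integral into a near region and a far region, extract a model pole near the singularity, subtract $\frac{1}{\alpha-\alpha'}$ (or $-\frac{1}{\alpha'}$) in the far field, check that the extracted kernels have vanishing principal value independently of the centering and truncation, and show the remainders are absolutely integrable. The differences are in the implementation, and they are worth recording. First, in the near field you freeze the coefficient at $\alpha$, extracting $\frac{1}{\xi_{\alpha}(\alpha)(\alpha-\alpha')}$, so the extracted kernel is exactly odd and its principal value vanishes trivially, with the remainder controlled by Taylor's theorem; the paper instead extracts $\frac{1}{\xi_{\alpha}(\alpha')(\alpha-\alpha')}$, views that term as a localized Hilbert transform (of $1/\xi_{\alpha}$), and controls the remainder via divided-difference estimates on $q_{1}[\xi]$ and $q_{2}[\xi]$ --- equivalent in substance, slightly different bookkeeping. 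Second, and more significantly, your far-field and item-(2) arguments use only the chord-arc condition \eqref{chordArc} and the \emph{boundedness} of $\xi(\alpha)-\alpha$, so a single argument covers the decaying and periodic cases simultaneously; the paper instead argues by cases, using an $L^{2}\times L^{2}$ (Cauchy--Schwarz) pairing in the decaying case of item (2) (as in \eqref{toVerifyAssumptionB2}), an additional subtraction in the periodic case, and an appeal to the cotangent summation formula for the periodic case of item (1). Your route therefore shows that the $L^{2}$ hypothesis in the statement of Lemma \ref{assumptionBLemma} is not actually needed to verify Assumption \ref{assumptionB}: boundedness suffices. Third, in item (2) you subtract $-\frac{1}{\alpha'}$ only on $|\alpha'|\geq R$ and keep the original bounded integrand on $[-R,R]$, which sidesteps a small issue in the paper's global subtraction of $\frac{1}{z-\alpha'}$, namely that $z-\alpha'$ can vanish when $z$ is real even though $\mathrm{dist}(z,\mathcal{C})>0$.
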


Again, we postpone the proof of this lemma until the appendix.

\section{The new expression for the velocity}

In this section we will form our new expression for the velocity field associated to a vortex sheet, and we will also give our new
expression for the Birkhoff-Rott integral.  We will then show that, under some assumptions, this velocity is divergence-free, is
curl-free in the interior of the two fluid regions, and satisfies the desired jump conditions at the fluid interface.  These formulas
will use a cutoff function which we will introduce, and we will also show that under the same assumptions, the velocity is independent
of the choice of cutoff function.

\subsection{Rewriting the velocity}\label{newVelocitySection}
We introduce a cutoff function, $a:\mathbb{R}\rightarrow\mathbb{R}.$  We will take $a$ to be of class $C^{\infty},$ with 
$a(s)=1$ for $|s|\leq1$ and $a(s)=0$ for $|s|\geq2.$  On the interval $s\in[-2,-1]$ we take $a$ to be monotone increasing, 
and on the interval $s\in[1,2]$ we take $a$ to be monotone decreasing.

We assume Assumption \ref{assumptionA} and Assumption \ref{assumptionB}.
With $c[\gamma]$ as in Assumption \ref{assumptionA}, we denote $\tilde{\gamma}=\gamma-c[\gamma].$
 Then, we first decompose the velocity \eqref{velocityComplex} for $z\notin\mathcal{C}$ as 
\begin{equation}\nonumber
(u-iv)(z)=\frac{1}{2\pi i}\mathrm{PV}_{\infty}\int_{-\infty}^{\infty}\frac{c[\gamma]}{z-\xi(\alpha')}\ \mathrm{d}\alpha'
+\frac{1}{2\pi}\mathrm{PV}_{\infty}\int_{-\infty}^{\infty}\frac{\tilde{\gamma}(\alpha')}{z-\xi(\alpha')}\ \mathrm{d}\alpha'.
\end{equation}
We have written these as principal value integrals at infinity in case this is necessary for convergence (as in the spatially periodic case,
for example).
We next introduce the cutoff function into the second integral on the right-hand side:
\begin{multline}\nonumber
(u-iv)(z)=\frac{1}{2\pi i}\mathrm{PV}_{\infty}\int_{-\infty}^{\infty}\frac{c[\gamma]}{z-\xi(\alpha')}\ \mathrm{d}\alpha'
+\frac{1}{2\pi}\int_{-\infty}^{\infty}\frac{a(|z-\xi(\alpha')|^{2})\tilde{\gamma}(\alpha')}{z-\xi(\alpha')}\ \mathrm{d}\alpha'
\\
+\frac{1}{2\pi}\mathrm{PV}_{\infty}\int_{-\infty}^{\infty}\frac{(1-a(|z-\xi(\alpha')|^{2}))\tilde{\gamma}(\alpha')}{z-\xi(\alpha')}\ \mathrm{d}\alpha'.
\end{multline}
Notice that the second integral on the right-hand side is no longer a principal value, as the cutoff function localizes the integral.
As we discussed in the introduction, the point of introducing the cutoff function is to be able to realize greater decay from the kernel
when integrating by parts in the far-field piece.  We therefore integrate by parts in the third integral on the right-hand side, using the
identity $\Gamma_{\alpha}=\tilde{\gamma},$ finding
\begin{multline}\label{newVelocityForm}
(u-iv)(z)=\frac{1}{2\pi i} \mathrm{PV}_{\infty}\int_{-\infty}^{\infty}\frac{c[\gamma]}{z-\xi(\alpha')}\ \mathrm{d}\alpha'
+\frac{1}{2\pi i}\int_{-\infty}^{\infty}\frac{a(|z-\xi(\alpha')|^{2})\tilde{\gamma}(\alpha')}{z-\xi(\alpha')}\ \mathrm{d}\alpha'
\\
-\frac{1}{2\pi i}\int_{-\infty}^{\infty}\frac{(a'(|z-\xi(\alpha')|^{2}))\left(\mathrm{Re}(\xi_{\alpha}(\alpha')(\overline{z-\xi(\alpha')}))\right)
\Gamma(\alpha')}{z-\xi(\alpha')}\ \mathrm{d}\alpha'.
\\
-\frac{1}{2\pi i}\int_{-\infty}^{\infty}\frac{(1-a(|z-\xi(\alpha')|^{2})\xi_{\alpha}(\alpha')\Gamma(\alpha')}{(z-\xi(\alpha'))^{2}}\ \mathrm{d}\alpha'.
\end{multline}
Notice that the final two integrals on the right-hand side are again no longer principal value integrals; the first of these has now been localized,
while the final integral converges without the principal value.  
We introduce further notation following this decomposition,
\begin{equation}\nonumber
u-iv=I_{0}+I_{1}+I_{2}+I_{3}.
\end{equation}

\subsection{The new expression for the Birkhoff-Rott integral}\label{newBirkhoffRottSection}

By replacing $z$ in the velocity integrals with a point $\xi(\alpha)$ on $\mathcal{C},$
we define our extension of the Birkhoff-Rott integral to be
\begin{equation}\label{Wdecomposition}
W=W_{0}+W_{1}+W_{2}+W_{3},
\end{equation}
where each of these integrals are defined as
\begin{equation}\nonumber
W_{0}^{*}=\frac{1}{2\pi i}\mathrm{PV}\int_{-\infty}^{\infty}\frac{c[\gamma]}{\xi(\alpha)-\xi(\alpha')}\ \mathrm{d}\alpha',
\end{equation}
\begin{equation}\nonumber
W_{1}^{*}=\frac{1}{2\pi i}\int_{-\infty}^{\infty}\frac{a(|\xi(\alpha)-\xi(\alpha')|^{2})\tilde{\gamma}(\alpha')}{\xi(\alpha)-\xi(\alpha')}\ \mathrm{d}\alpha',
\end{equation}
\begin{equation}\nonumber
W_{2}^{*}=-\frac{1}{2\pi i}\int_{-\infty}^{\infty}\frac{(a'(|\xi(\alpha)-\xi(\alpha')|^{2}))\left(\mathrm{Re}(\xi_{\alpha}(\alpha')(\overline{\xi(\alpha)-\xi(\alpha')}))\right)
\Gamma(\alpha')}{\xi(\alpha)-\xi(\alpha')}\ \mathrm{d}\alpha',
\end{equation}
\begin{equation}\nonumber
W_{3}^{*}=-\frac{1}{2\pi i}\int_{-\infty}^{\infty}\frac{(1-a(|\xi(\alpha)-\xi(\alpha')|^{2})\xi_{\alpha}(\alpha')\Gamma(\alpha')}{(\xi(\alpha)-\xi(\alpha'))^{2}}\ \mathrm{d}\alpha'.
\end{equation}

\subsection{The new velocity is well-defined}\label{wellDefinedSection}

Under some assumptions, we can ensures that the new expressions for the velocity integral and the Birkhoff-Rott integral are well-defined.  We now detail
the assumptions, and then we discuss how this gives convergence of the relevant integrals.

We take $\gamma$ to be locally $L^{2}$ integrable, and we take the curve $\mathcal{C}$ to have twice continuously differentiable parameterization $\xi.$
We assume Assumption \ref{assumptionA} and we either assume Assumption \ref{assumptionB} or that $c[\gamma]=0.$  We assume that
$\xi$ satisfies the chord-arc condition \eqref{chordArc}.  We take the parameterization $\xi$ to be such that 
there exists $M>0$ such that for all $\alpha\in\mathbb{R},$ $\frac{1}{M}\leq |\xi_{\alpha}(\alpha)|\leq M.$
Finally, we assume that the function $\xi(\alpha)-\alpha$ may be decomposed as $f_{1}+f_{2},$ with $f_{1}\in L^{2}(\mathbb{R})$ and $f_{2}\in L^{\infty}(\mathbb{R}).$

We mention the reason that we take $\xi(\alpha)-\alpha$ to be able to be written as the sum of a function in $L^{2}$ plus a bounded function.  
It is because either of these assumption on their own is helpful; in the decaying case we have $\xi(\alpha)-\alpha$ in $L^{2},$ and the velocity is
well-defined in this case.  In the periodic case we typically have enough regularity that $\xi(\alpha)-\alpha$ is bounded, and the velocity is well-defined
in this case.  We can work with either of these, and to be more general we make the assumption that there is a part, $f_{1},$ which is in $L^{2},$ and
a part, $f_{2},$ which is bounded.

We now discuss how these assumptions give convergence of the integrals.
By Assumption \ref{assumptionA}, the constant $c[\gamma]$ exists.  Either $c[\gamma]=0,$ in which case $I_{0}$ and $W_{0}$ exist, or
we may rely upon Assumption \ref{assumptionB} which tells us that $I_{0}$ and $W_{0}$ exist.

We next consider $I_{1}$ and $I_{2}.$  By the chord-arc condition \eqref{chordArc} we see that $|z(\alpha)|\rightarrow\infty$ as $\alpha\rightarrow\pm\infty.$
With the presence of the cutoff function in $I_{1}$ and $I_{2},$ we infer that for fixed $z,$ these could be written as integrals over only a finite interval of values
of $\alpha.$  Because the distance from $z$ to $\mathcal{C}$ is non-zero, the denominators in these integrals are not singular.  The assumption that
$\gamma$ is locally $L^{2}$ integrable is now enough to ensure that $I_{1}$ and $I_{2}$ converge.

For $I_{3},$ we  add and subtract to see
\begin{equation}\nonumber
\frac{1}{z-\xi(\alpha')}=\frac{1}{z-\alpha'}+\frac{\xi(\alpha')-\alpha'}{(z-\alpha')(z-\xi(\alpha'))}.
\end{equation}
If the final numerator is bounded, then since $z$ is a positive distance from $\mathcal{C},$ this behaves like $\frac{1}{\alpha'}$ as 
$|\alpha'|\rightarrow\infty.$  Together with the sublinear growth of $\Gamma$ guaranteed by Assumption \ref{assumptionA}, this is enough to ensure
that this integral converges.  If instead the final numerator is in $L^{2}(\mathbb{R}),$ then we may instead use that the final fraction is a product of two
$L^{2}$ functions and is thus integrable.  (Note that we can without loss of generality assume $z\neq\alpha'$ for any $\alpha'$ in the relevant domain of integration;
this is because if we did have $z=\alpha'$ then we could decompose the integral into a piece with a finite domain of integration and a new improper integral
excluding such value of $\alpha'.$  The integral with finite domain of integration then becomes similar to $I_{2}.$)

For $W_{1},$ we consider the decomposition
\begin{equation}\nonumber
\frac{1}{\xi(\alpha)-\xi(\alpha')}=\frac{1}{\xi_{\alpha}(\alpha')(\alpha-\alpha')}+\left[\frac{1}{\xi(\alpha)-\xi(\alpha')}-\frac{1}{\xi_{\alpha}(\alpha')(\alpha-\alpha')}\right].
\end{equation}
The first term on the right-hand side gives rise to a localized Hilbert transform which converges since we have assumed $\gamma$ is locally $L^{2}$ integrable.
The second term on the right-hand side (in brackets) is expressible in terms of divided differences, as in the proof of Lemma \ref{assumptionBLemma}
(see Appendix \ref{proofsSection}).  
This term in non-singular
(i.e. the singularity has been subtracted), and the integral is finite.

For $W_{2}$ and $W_{3},$ we use the chord-arc condition \eqref{chordArc}, since
\begin{equation}\nonumber
\left|\frac{1}{\xi(\alpha)-\xi(\alpha')}\right|=\left|\frac{\alpha-\alpha'}{\xi(\alpha)-\xi(\alpha')}\right| \left|\frac{1}{\alpha-\alpha'}\right| \leq \frac{1}{K}
\left|\frac{1}{\alpha-\alpha'}\right|.
\end{equation}
These integrals are non-singular because of the presence of the cutoff function.  Specifically, these integrals are only supported at values of $\alpha'$ such that
$|\xi(\alpha)-\xi(\alpha')|\geq 1,$ and thus for any given $\alpha$ there is a neighborhood of $\alpha$ for which the integrand is zero.  
Now, because Assumption \ref{assumptionA} implies that $\Gamma$ grows sublinearly at positive and negative infinity, these integrals can be seen to converge.

\subsection{Approximation by decaying vortex sheets and uniform convergence on compact sets}\label{assumptionCSection}

We next wish to conclude that the velocity field given by \eqref{newVelocityForm} is indeed the velocity field for a vortex sheet.  However, it is difficult to see directly
that the divergence and curl of the velocity in this form are zero in the interior of the fluid regions.  Instead, we will use an approximation scheme, then
compute the divergence and curl for the approximate solutions, study convergence of the approximations, and then take the limit of the divergence and 
curl in the sense of distributions.
We assume that the curve $\xi$ and the modified vortex sheet strength $\tilde{\gamma}$ can be well-approximated by decaying curves $\xi^{n}$ and
and decaying functions $\tilde{\gamma}^{n}$ in a certain sense.

Before discussing this approximation scheme, we make some remarks on the parameterization $\xi$ of the curve $\mathcal{C}.$
First, we have assumed that the parameterization is such that $|\xi_{\alpha}|$ is bounded above and below (away from zero); it is 
always possible to find such a parameterization, for instance by parameterizing by arclength.
Next, the chord-arc condition \eqref{chordArc} implies that $|\xi(\alpha)|\rightarrow\infty$ as $|\alpha|\rightarrow\infty,$ and $\xi$ and $\alpha$ go to infinity at 
comparable rates.    As a consequence of these facts, for any compact
set, $K_{0},$ for large enough $|\alpha|,$ we have that $\xi(\alpha)\notin K_{0}.$  These are some features which we seek to maintain in making an approximation 
by decaying vortex sheets.  Furthermore, we wish to have the property that for some interval of values $[\alpha_{-n},\alpha_{n}],$ the approximation $\xi^{n}$
and the original parameterization $\xi$ will agree, with $\alpha_{-n}\rightarrow-\infty$ and $\alpha_{n}\rightarrow\infty$ as $n\rightarrow\infty.$

Specifying such an approximation is complicated in general, as the interface may have complicated overturning features.
The approximation of $\mathcal{C}$ by decaying vortex sheets is most easily achievable in the graph case, when $\xi(\alpha)=\alpha+i\eta(\alpha).$
For any $n\in\mathbb{N},$ we introduce a cutoff function $\phi_{n}.$  This satisfies $\phi_{n}(\alpha)=1$ for $\alpha\in[-n,n],$ and $\phi_{n}(\alpha)=0$
for $|\alpha|>n+1.$  Otherwise, $\phi_{n}$ is smooth and monotone.  
Then we let $\xi^{n}(\alpha)=\alpha+i\eta(\alpha)\phi_{n}(\alpha)=\alpha+(\xi(\alpha)-\alpha))\phi_{n}(\alpha).$ 
We may take $\alpha_{\pm n}=\pm n,$ then.  We also need to discuss the approximate vortex sheet strength, and we let
$\tilde{\gamma}^{n}$ be given by $\tilde{\gamma}^{n}(\alpha)=\tilde{\gamma}(\alpha)\phi_{n}(\alpha).$

In general, for $n\in\mathbb{N}$ we let 
\begin{equation}\nonumber
\alpha_{n}=\min\{\alpha\in\mathbb{R}:\mathrm{Re}(z(\alpha))\geq n\},
\qquad
\alpha_{-n}=\max\{\alpha\in\mathbb{R}:\mathrm{Re}(z(\alpha))\leq -n\}.
\end{equation}
Then for $\alpha\in[\alpha_{-n},\alpha_{n}],$ we define $\xi^{n}(\alpha)=\xi(\alpha).$  The approximate curve can then be further specified by
smoothly extending this curve to the remaining values of $\alpha\in\mathbb{R},$ such that the curve decays at horizontal infinity.

Beyond needing to define such an approximation, we also need convergence of a certain integral.  
We reiterate our assumptions on the approximation sequence and state the assumption on the convergence of the integral
in the following assumption.

\begin{assumption}\label{assumptionC}
We say that the curve $\mathcal{C}$ and vortex sheet strength $\gamma$ satisfy Assumption \ref{assumptionC} if there exist
sequences $\alpha_{n},$ $\alpha_{-n},$ $\xi^{n},$ and $\tilde{\gamma}^{n}$ such that 
\begin{itemize}
\item $\alpha_{n}\rightarrow\infty$ and $\alpha_{-n}\rightarrow-\infty$ as $n\rightarrow\infty,$
\item for all $n\in\mathbb{N},$
for all $\alpha\in[\alpha_{-n},\alpha_{n}],$ $\xi^{n}(\alpha)=\xi(\alpha)$ and $\tilde{\gamma}^{n}(\alpha)=\tilde{\gamma}(\alpha),$
\item  for all $n\in\mathbb{N},$ $\xi^{n}(\alpha)-\alpha\in L^{2}(\mathbb{R}),$ $\tilde{\gamma}^{n}\in L^{2}(\mathbb{R}),$ 
\item for all $n\in\mathbb{N},$ $\xi^{n}$ is twice continuously differentiable,
\item for all $n\in\mathbb{N},$ $\xi^{n}$ satisfies the chord-arc assumption \eqref{chordArc},
\item for any compact set $K_{0}\subset\mathbb{C},$ there exists $N\in\mathbb{N}$ and $A_{\pm}\in\mathbb{R}$ such that for all $n\geq N,$
for all $\alpha\notin[A_{-},A_{+}],$ $\xi^{n}(\alpha)\notin K_{0},$ and  
\item letting $\Gamma^{n}(\alpha)=\int_{0}^{\alpha}\tilde{\gamma}^{n}(\alpha')\ \mathrm{d}\alpha',$
and  $\Omega_{n}=(-\infty,\alpha_{-n}]\cup[\alpha_{n},\infty),$ for any compact set $K_{0},$ the sequence of integrals
\begin{equation}\label{integralToBeBounded}
\int_{\Omega_{n}}\frac{|\xi^{n}_{\alpha}(\alpha')||\Gamma^{n}(\alpha')|}{|z-\xi^{n}(\alpha')|^{2}}\ \mathrm{d}\alpha'
\end{equation}
converges to zero as $n$ goes to infinity, uniformly with respect to $z\in K_{0}.$
\end{itemize}
\end{assumption}

Note that this assumption appears to have many facets, but the truth is simpler than the assumption may make it appear.  For each of our examples 
listed in Section \ref{examplesSection}, we will be verifying that Assumption \ref{assumptionC} holds in Section \ref{examplesAgain} below.  All of the items in the
assumption are consequences of cutting off non-decaying functions by multiplying by a function with compact support.  We leave the situation as general
as it is given in Assumption \ref{assumptionC} so as to not limit ourselves unnecessarily.

\subsection{The main result}\label{mainTheoremSection} 
We now show that under our assumptions, the new expression for the velocity field is indeed the velocity field
associated to a vortex sheet in an incompressible flow.

\begin{theorem}\label{mainTheorem} 
If Assumption \ref{assumptionA} and Assumption \ref{assumptionC} are satisfied, and if either Assumption \ref{assumptionB} is satisfied or 
$c[\gamma]=0,$ and if the parameterization $\xi$ satisfies the following conditions:
\begin{itemize}
\item $\xi$ satisfies the chord-arc condition \eqref{chordArc},
\item  there exists $M>0$ such that for all $\alpha\in\mathbb{R},$ 
$\frac{1}{M}\leq |\xi_{\alpha}(\alpha)|\leq M,$ and
\item the function $\xi(\alpha)-\alpha$ may be decomposed as $f_{1}+f_{2}$ with $f_{1}\in L^{2}(\mathbb{R})$ and $f_{2}\in L^{\infty}(\mathbb{R}),$
\end{itemize}
then the velocity field as defined in \eqref{newVelocityForm} is incompressible and has vorticity equal to $\gamma\delta_{C}.$
Furthermore, this velocity is independent of the choice of cutoff function.
\end{theorem}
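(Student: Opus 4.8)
The plan is to prove all three assertions by approximation, pulling back to the classical decaying theory in which the Biot--Savart field is known to be divergence-free and to have its vorticity concentrated on the sheet. Writing the complex velocity as $\Phi=u-iv$, I would first introduce the approximate fields
\[
\Phi^{n}(z)=\frac{1}{2\pi i}\int_{\mathbb{R}}\frac{\gamma^{n}(\alpha')}{z-\xi^{n}(\alpha')}\ \mathrm{d}\alpha',
\]
where $\xi^{n}$ and $\tilde\gamma^{n}$ are the decaying sequences supplied by Assumption~\ref{assumptionC}, and $\gamma^{n}=\tilde\gamma^{n}+c[\gamma]\phi_{n}$ restores the constant part through a cutoff $\phi_{n}$ that equals $1$ on $[\alpha_{-n},\alpha_{n}]$ and is compactly supported. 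Since $\xi^{n}-\alpha$ and $\gamma^{n}$ lie in $L^{2}$ and decay, this is a genuinely convergent Biot--Savart integral for the vorticity $\gamma^{n}\delta_{\mathcal{C}^{n}}$, so the classical theory gives at once that the associated real field $(u^{n},v^{n})$ is divergence-free on $\mathbb{R}^{2}$ with $\mathrm{curl}(u^{n},v^{n})=\gamma^{n}\delta_{\mathcal{C}^{n}}$ distributionally. The first technical step is then to repeat, for this decaying data, the manipulations that produced \eqref{newVelocityForm}: splitting off $c[\gamma]\phi_{n}$, inserting $a(|z-\xi^{n}|^{2})$, and integrating by parts using $\Gamma^{n}_{\alpha}=\tilde\gamma^{n}$. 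Because everything now decays the boundary terms vanish, every step is reversible, and $\Phi^{n}$ coincides with the new-formula velocity built from $(\xi^{n},\tilde\gamma^{n},\Gamma^{n},c[\gamma]\phi_{n})$, whose four pieces I denote $I_{0}^{n}+I_{1}^{n}+I_{2}^{n}+I_{3}^{n}$.

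The heart of the matter is to show $\Phi^{n}\to\Phi$ in $L^{1}_{\mathrm{loc}}(\mathbb{R}^{2})$, which I would establish piecewise. For $I_{1}^{n}$ and $I_{2}^{n}$ the cutoff restricts the integration to $\{\,|z-\xi^{n}(\alpha')|^{2}\le 2\,\}$; by the chord-arc condition \eqref{chordArc} and the Assumption~\ref{assumptionC} property that $\xi^{n}(\alpha)\notin K_{0}$ for $\alpha\notin[A_{-},A_{+}]$ uniformly in $n\ge N$, this support lies in a fixed compact $\alpha$-interval, on which $\xi^{n}=\xi$ and $\tilde\gamma^{n}=\tilde\gamma$ as soon as $[\alpha_{-n},\alpha_{n}]$ contains it; hence $I_{1}^{n}=I_{1}$ and $I_{2}^{n}=I_{2}$ for all large $n$, even for $z$ in a neighborhood of $\mathcal{C}$. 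For $I_{3}^{n}$ the integrand is nonsingular (the factor $1-a$ vanishes near $z$), so I split the domain into $[\alpha_{-n},\alpha_{n}]$, where it coincides with that of $I_{3}$, and the tail $\Omega_{n}$, whose contribution is dominated by \eqref{integralToBeBounded} and thus tends to zero uniformly on each compact $K_{0}$ by Assumption~\ref{assumptionC}. The term $I_{0}^{n}$ is trivial when $c[\gamma]=0$, and otherwise converges to $I_{0}$ precisely because Assumption~\ref{assumptionB} makes the principal value at infinity exist and remain stable under the decaying cutoff. Since for large $n$ the singular near-field pieces of $\Phi^{n}$ and $\Phi$ agree exactly, the convergence holds in $L^{1}_{\mathrm{loc}}$ across $\mathcal{C}$ as well, so $\mathrm{div}(u,v)=0$ and $\mathrm{curl}(u,v)=\lim_{n}\gamma^{n}\delta_{\mathcal{C}^{n}}$ in $\mathcal{D}'(\mathbb{R}^{2})$. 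Finally a test-function computation, using that $\xi^{n}=\xi$ and $\gamma^{n}=\gamma$ on the exhausting intervals and that test functions are compactly supported, gives $\gamma^{n}\delta_{\mathcal{C}^{n}}\to\gamma\delta_{\mathcal{C}}$, so $(u,v)$ is incompressible with vorticity $\gamma\delta_{\mathcal{C}}$.

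For the independence of the cutoff I would argue directly. Given two admissible cutoffs $a$ and $\tilde a$, set $b=a-\tilde a$; only $I_{1}+I_{2}+I_{3}$ depend on the cutoff, and using $\Gamma_{\alpha'}=\tilde\gamma$, $\tfrac{d}{d\alpha'}(z-\xi)^{-1}=\xi_{\alpha}(z-\xi)^{-2}$, and $\tfrac{d}{d\alpha'}|z-\xi|^{2}=-2\,\mathrm{Re}(\xi_{\alpha}\overline{(z-\xi)})$, the three cutoff-dependent differences assemble into the single integral
\[
\frac{1}{2\pi i}\int_{\mathbb{R}}\frac{d}{d\alpha'}\left(\frac{b(|z-\xi(\alpha')|^{2})\,\Gamma(\alpha')}{z-\xi(\alpha')}\right)\ \mathrm{d}\alpha'
\]
of a perfect derivative. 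Since $b$ is supported where $1\le|z-\xi(\alpha')|^{2}\le 2$ while the chord-arc condition forces $|z-\xi(\alpha')|\to\infty$ as $|\alpha'|\to\infty$, the bracketed antiderivative vanishes at $\pm\infty$; the boundary terms are zero and the difference vanishes, so $\Phi$ is independent of $a$.

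I expect the principal difficulty to be the convergence of $I_{3}^{n}$ in the second step: Assumption~\ref{assumptionA} only provides sublinear, not decaying, growth of $\Gamma$, so the far-field integral is merely conditionally controlled, and the uniform-on-compacts smallness of the tail \eqref{integralToBeBounded} postulated in Assumption~\ref{assumptionC} is exactly what must be invoked---and invoked uniformly in $z$, including near $\mathcal{C}$---to pass the curl to the distributional limit. A secondary delicate point is the stability of the principal value at infinity in $I_{0}$ under the decaying approximation, which is the sole reason Assumption~\ref{assumptionB} enters and is the one place where convergence across $\mathcal{C}$ for the constant-strength part must be checked by hand.
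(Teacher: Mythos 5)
Your overall skeleton is the same as the paper's: approximate by decaying sheets via Assumption \ref{assumptionC}, invoke the classical decaying theory for the approximations, prove convergence on compact sets piece by piece (with $I_{1}^{n}=I_{1}$ and $I_{2}^{n}=I_{2}$ eventually, and the tail of $I_{3}^{n}$ controlled by \eqref{integralToBeBounded}), and pass the divergence and curl to the limit distributionally. The genuine gap is your treatment of the constant part. You replace $c[\gamma]$ by $c[\gamma]\phi_{n}$ on the approximate curve, so your approximation contains $I_{0}^{n}=\frac{c[\gamma]}{2\pi i}\int_{\mathbb{R}}\phi_{n}(\alpha')\left(z-\xi^{n}(\alpha')\right)^{-1}\mathrm{d}\alpha'$, and you must prove $I_{0}^{n}\rightarrow I_{0}$. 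You assert this follows because Assumption \ref{assumptionB} makes the principal value ``remain stable under the decaying cutoff,'' but Assumption \ref{assumptionB} gives no such stability: it asserts only that the limit over symmetric windows $[a-M_{n},a+M_{n}]$ about a fixed center exists. The integrand here is only conditionally integrable (it decays like $1/\alpha'$), and the value of such truncations is sensitive to the window: for $\xi(\alpha')=\alpha'$ one has $\int_{-M}^{N}(z-\alpha')^{-1}\mathrm{d}\alpha'=\log(M/N)+O(1)$ terms that cancel only when $M/N\rightarrow1$. Your windows $[\alpha_{-n},\alpha_{n}]$ need not be symmetric about any fixed center (under the theorem's hypothesis $\xi(\alpha)-\alpha=f_{1}+f_{2}$ with $f_{1}$ merely in $L^{2}$, the centers can drift), and in the transition zones the curve itself is replaced by $\xi^{n}\neq\xi$. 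So the convergence $I_{0}^{n}\rightarrow I_{0}$ is exactly a statement requiring proof, and it is the one the paper is engineered to avoid: in the paper's proof the term $I_{0}$ is carried through the approximation \emph{unchanged} (it deliberately has no superscript $n$), so no convergence claim about the constant part is ever needed. The cleanest repair of your argument is to do the same and not cut off $c[\gamma]$ at all.

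Two further points of comparison. First, a smaller inaccuracy: the tail of the original $I_{3}$ over $\Omega_{n}$ is not ``dominated by \eqref{integralToBeBounded},'' since that integral involves $\Gamma^{n}$, which is eventually constant, whereas $\Gamma$ grows like $|\alpha'|^{\beta}$; the original tail needs the separate estimate $c\int_{\alpha_{n}}^{\infty}|\alpha'|^{\beta}(\alpha'-\alpha_{**})^{-2}\mathrm{d}\alpha'\rightarrow0$, obtained from Assumption \ref{assumptionA} together with the chord-arc condition \eqref{chordArc} (this is the paper's bound on its quantity $J_{n}$); you gesture at exactly these ingredients in your closing paragraph, so this is an easy fix. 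Second, your cutoff-independence argument is correct and is genuinely more direct than the paper's: writing $b=a-\tilde{a}$, the three cutoff-dependent differences assemble into the integral of the exact derivative $\partial_{\alpha'}\bigl(b(|z-\xi(\alpha')|^{2})\Gamma(\alpha')(z-\xi(\alpha'))^{-1}\bigr)$, whose antiderivative is supported in the compact $\alpha'$-set where $1\leq|z-\xi(\alpha')|^{2}\leq2$ (compact by \eqref{chordArc}), so the integral vanishes; note the assembly requires the chain-rule factor $2\,\mathrm{Re}(\xi_{\alpha}\overline{z-\xi(\alpha')})$ in the $I_{2}$-type term. The paper instead obtains independence only as a corollary of the approximation scheme, because the cutoff cancels in \eqref{noCutoff}; your argument needs no approximation and would stand on its own.
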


\begin{proof}
Given the family of curves $\xi^{n}$ we define an approximate velocity $(u^{n},v^{n}):$
\begin{multline}\nonumber
u^{n}-iv^{n}=\frac{1}{2\pi i}\mathrm{PV}_{\infty}\int_{-\infty}^{\infty}\frac{c[\gamma]}{z-\xi(\alpha')}\ \mathrm{d}\alpha'
\\
+\frac{1}{2\pi i}\int_{-\infty}^{\infty}\frac{a(|z-\xi^{n}(\alpha')|^{2})\tilde{\gamma}^{n}(\alpha')}{z-\xi^{n}(\alpha')}\ \mathrm{d}\alpha'
\\
-\frac{1}{2\pi i}\int_{-\infty}^{\infty}\frac{(a'(|z-\xi^{n}(\alpha')|^{2}))\left(\mathrm{Re}(\xi^{n}_{\alpha}(\alpha')(\overline{z-\xi^{n}(\alpha')}))\right)
\Gamma^{n}(\alpha')}{z-\xi^{n}(\alpha')}\ \mathrm{d}\alpha'.
\\
-\frac{1}{2\pi i}\int_{-\infty}^{\infty}\frac{(1-a(|z-\xi^{n}(\alpha')|^{2})\xi^{n}_{\alpha}(\alpha')\Gamma^{n}(\alpha')}{(z-\xi^{n}(\alpha'))^{2}}\ \mathrm{d}\alpha'.
\end{multline}
We again decompose this, as 
\begin{equation}\nonumber
u^{n}-iv^{n}=I_{0}+I_{1}^{n}+I_{2}^{n}+I_{3}^{n}.
\end{equation}
Note that $I_{0}$ does not have a superscript here because it is unmodified from the corresponding expression in \eqref{newVelocityForm}.

We will study the convergence of $(u^{n},v^{n})$ to $(u,v),$ but first we want to consider $(u^{n},v^{n})$ in its own right.  
Considering for the moment only $I_{2}^{n}+I_{3}^{n},$ we can write
\begin{equation}\label{reintegrateByParts}
I_{2}^{n}+I_{3}^{n}
=\frac{1}{2\pi i}\int_{-\infty}^{\infty}\partial_{\alpha'}\left(\frac{(1-a(|z-\xi^{n}(\alpha')|^{2}))}{z-\xi^{n}(\alpha')}\right)\tilde{\Gamma}^{n}(\alpha')\ \mathrm{d}\alpha'.
\end{equation}
Since $\tilde{\gamma}^{n}\in L^{2}(\mathbb{R}),$ by Lemma \ref{assumptionASatisfiedLemma} and its proof, 
we have that $\tilde{\Gamma}^{n}(\alpha')$ grows at infinity only like $\sqrt{\alpha'}$ at worst.
Thus, we may integrate \eqref{reintegrateByParts} by parts with zero boundary contributions.  When combining this with the remaining parts of the velocity, we find
\begin{equation}\label{noCutoff}
u_{n}-iv_{n}=I_{0}+\frac{1}{2\pi i}\mathrm{PV}\int_{-\infty}^{\infty}\frac{\tilde{\gamma}^{n}(\alpha')}{z-\xi^{n}(\alpha')}\ \mathrm{d}\alpha'.
\end{equation}
This is the velocity field for a decaying vortex sheet with vortex sheet strength equal to $\tilde{\gamma}^{n}+c[\gamma],$ with the position of the sheet
given by the curve $\mathcal{C}^{n}$ whose position in the complex plane is given by $\xi^{n}.$  Thus $(u_{n},v_{n})$ is an incompressible
velocity field, with vorticity equal to $c[\gamma]\delta_{\mathcal{C}}+\tilde{\gamma}^{n}\delta_{\mathcal{C}_{n}}.$

Let $K\subset\mathbb{R}^{2}$ be a compact set.  We now wish to show that $u^{n}-iv^{n}$ converges uniformly to $u-iv$ on $K.$
We will accomplish this by showing that $I_{1}^{n}$ converges uniformly to $I_{1}$ on $K,$ $I_{2}^{n}$ converges uniformly to $I_{2}$ on $K,$
and $I_{3}^{n}$ converges uniformly to $I_{3}$ on $K.$

We define a second compact set, $K_{3},$ to be the set of all points $p\in\mathbb{R}^{2}$ such that $d(p,K)\leq3.$  Then there exists
$A_{\pm}\in\mathbb{R}$ such that for $\alpha\geq A_{+}$ or for $\alpha<A_{-},$ $\xi(\alpha)\notin K_{3}.$ 
Furthermore, by our assumptions on the approximations $\xi^{n},$ 
there exists $N\in\mathbb{N}$ such that for all $n\geq N,$ for all $\alpha\in[A_{-},A_{+}],$ $\xi^{n}(\alpha)=\xi(\alpha)$
and $\tilde{\gamma}^{n}(\alpha)=\tilde{\gamma}(\alpha),$
and for all $\alpha\notin[A_{-},A_{+}],$ $\xi^{n}(\alpha)\notin K_{3}.$
Let $z\in K$ be given.  Then $I_{1}$ and $I_{1}^{n}$ can be rewritten as
\begin{equation}\nonumber
I_{1}=\frac{1}{2\pi i}\int_{A_{-}}^{A_{+}}\frac{a(|z-\xi(\alpha')|^{2})\tilde{\gamma}(\alpha')}{z-\xi(\alpha')}\ \mathrm{d}\alpha',
\end{equation}
\begin{equation}\nonumber
I_{1}^{n}=\frac{1}{2\pi i}\int_{A_{-}}^{A_{+}}\frac{a(|z-\xi^{n}(\alpha')|^{2})\tilde{\gamma}^{n}(\alpha')}{z-\xi^{n}(\alpha')}\ \mathrm{d}\alpha',
\end{equation} 
for $n\geq N.$
But then we see immediately that for all $n\geq N,$ we have $I_{1}=I_{1}^{n}.$  So, indeed, $I_{1}^{n}$ converges
uniformly on $K$ to $I_{1},$    The same situation holds for $I_{2}^{n}$ converging uniformly on $K$ to $I_{2}^{n}.$

It only remains to show that $I_{3}^{n}$ converges uniformly on $K$ to $I_{3}.$  Note that 
we only need to consider the domain of integration to be $(-\infty,\alpha_{-n}]\cup[\alpha_{n},\infty),$ as the integrands
for $I_{3}$ and $I_{3}^{n}$ agree on the complement of this domain.  We will only consider $[\alpha_{n},\infty)$ in detail,
as the other case is entirely similar.  We define
\begin{equation}\nonumber
J^{n}=-\frac{1}{2\pi i}\int_{\alpha_{n}}^{\infty}
\frac{(1-a(|z-\xi(\alpha')|^{2})\xi_{\alpha}(\alpha')\Gamma(\alpha')}{(z-\xi(\alpha'))^{2}}\ \mathrm{d}\alpha',
\end{equation}
\begin{equation}\nonumber
I_{3}^{n,+}=-\frac{1}{2\pi i}\int_{\alpha_{n}}^{\infty}
\frac{(1-a(|z-\xi^{n}(\alpha')|^{2})\xi^{n}_{\alpha}(\alpha')\Gamma^{n}(\alpha')}{(z-\xi^{n}(\alpha'))^{2}}\ \mathrm{d}\alpha'.
\end{equation}
We need to show that $J^{n}-I_{3}^{n,+}$ can be made small by taking $n$ large, uniformly with respect to $z\in K.$  
We will accomplish this by showing that the two terms may be made individually small by taking $n$ large, uniformly with respect to $z\in K.$

For $n$ sufficiently large, for $\alpha'>\alpha_{n},$ it will be the case that $a(|z-\xi(\alpha')|^{2})=0.$  Thus we may bound $|J_{n}|$ as
\begin{equation}\nonumber
|J_{n}|\leq \int_{\alpha_{n}}^{\infty}\frac{|\xi_{\alpha}(\alpha')||\Gamma(\alpha')|}{|z-\xi(\alpha')|^{2}}\ \mathrm{d}\alpha'.
\end{equation}
Using the assumption that $|\xi_{\alpha}|$ is bounded and that $\gamma$ satisfies Assumption \ref{assumptionA}, we may further bound this as
\begin{equation}\nonumber
|J_{n}|\leq c \int_{\alpha_{n}}^{\infty} \frac{|\alpha'|^{\beta}}{|z-\xi(\alpha')|^{2}}\ \mathrm{d}\alpha',
\end{equation}
where the constant $c$ and the power $\beta\in[0,1)$ are taken to be independent of $n.$  
We let $\alpha_{*}$ be such that $\xi(\alpha_{*})$ is the nearest point on $\mathcal{C}$ to $z.$  Then we have
\begin{equation}\nonumber
|J_{n}|\leq c\int_{\alpha_{n}}^{\infty}\frac{|\alpha'|^{\beta}}{|\xi(\alpha_{*})-\xi(\alpha')|^{2}}\ \mathrm{d}\alpha'.
\end{equation}
Using the chord-arc condition \eqref{chordArc}, we may then bound this as
\begin{equation}\nonumber
|J_{n}|\leq c\int_{\alpha_{n}}^{\infty}\frac{|\alpha'|^{\beta}}{(\alpha'-\alpha_{*})^{2}}\ \mathrm{d}\alpha'.
\end{equation}
Since $z$ is chosen from a compact set, the set of all such $\alpha_{*}$ may also be taken from a compact set; we use the name $\alpha_{**}$ for
the largest possible such $\alpha_{*}.$  Then we have
 \begin{equation}\label{JnNoZ}
|J_{n}|\leq c\int_{\alpha_{n}}^{\infty}\frac{|\alpha'|^{\beta}}{(\alpha'-\alpha_{**})^{2}}\ \mathrm{d}\alpha'.
\end{equation}
Since $\alpha_{n}\rightarrow\infty$ as $n\rightarrow\infty,$ we conclude that $J_{n}\rightarrow0$ as $n\rightarrow\infty.$  Furthermore, since
\eqref{JnNoZ} no longer has any $z$ dependence on the right-hand side, this convergence is seen to be uniform with respect to the choice of
$z\in K.$

That the same conclusion holds for $I_{3}^{n,+}$ follows analogously, relying upon Assumption \ref{assumptionC}.  This concludes the proof that
$(u_{n},v_{n})$ converges to $(u,v)$ on compact sets.  We next must demonstrate the remaining claims, namely that $(u,v)$ has the desired divergence
and curl, and is independent of the choice of cutoff function, $a.$

With the velocity $(u_{n},v_{n})$ converging to $(u,v)$ uniformly on compact sets, we conclude that $(u_{n},v_{n})$ converges
to $(u,v)$ in the sense of distributions.  That is, for any pair of test functions $(\phi_{1},\phi_{2}),$ which is a pair of $C^{\infty}$ 
functions with compact support, we have
\begin{equation}\nonumber
((u_{n},v_{n}),(\phi_{1},\phi_{2}))
=\int_{X}u_{n}\phi_{1}+v_{n}\phi_{2} \rightarrow \int_{X}u\phi_{1}+v\phi_{2} = ((u,v),(\phi_{1},\phi_{2})).
\end{equation}
Here, $X$ is a compact set containing the support of $\phi_{1}$ and $\phi_{2}.$  
Furthermore, for any test function $\phi,$ this implies 
\begin{equation}\nonumber
(u_{n},\phi_{x})\rightarrow (u,\phi_{x}),\qquad (u_{n},\phi_{y})\rightarrow (u,\phi_{y}),
\end{equation}
\begin{equation}\nonumber
(v_{n},\phi_{x})\rightarrow (v,\phi_{x}),\qquad (v_{n},\phi_{y})\rightarrow (v,\phi_{y}).
\end{equation}
This implies that $\mathrm{curl}(u_{n},v_{n})$ converges to $\mathrm{curl}(u,v)$ in the sense of distributions, and 
$\mathrm{div}(u_{n},v_{n})$ converges to $\mathrm{div}(u,v)$ in the sense of distributions.
But $\mathrm{div}(u_{n},v_{n})=0.$  We conclude that $(u,v)$ is divergence-free, in the
sense of distributions.
Note that it is not possible that the velocity field would 
be divergence-free in a classical sense, as the velocity is not continuous
and therefore not differentiable in the classical sense.  As we have said, the vorticity associated to $(u_{n},v_{n})$ is 
$c[\gamma]\delta_{\mathcal{C}}+\tilde{\gamma}^{n}\delta_{C_{n}},$ by our construction.  This means that $(u_{n},v_{n})$ is irrotational in the
interior of the two fluid regions.  Since the boundary curves $\xi_{n}$ converge uniformly (eventually identically, in fact) to the curves 
$\xi$ on compact sets, and similarly for $\tilde{\gamma}^{n},$
we may conclude that the vorticity associated to the velocity field $(u,v)$ is $\gamma\delta_{C},$ 
and again this implies the the fluid is irrotational in the interior of either fluid region.

Finally, we discuss independence of the choice of cutoff function.  As can be seen in \eqref{noCutoff}, the approximate
velocities $(u_{n},v_{n})$ do not depend on the choice of the cutoff function, $a.$  Since these converge to the velocity
$(u,v),$ it is not possible for the velocity $(u,v)$ to depend on this choice.
\end{proof}

\subsection{Jump conditions}\label{jumpConditionsSection}

We may express the limiting values of the velocity $(u,v)$ as the point $\mathbf{x}$ approaches $\mathcal{C}$ in terms of $\mathbf{W}.$
We let $\mathbf{\hat{t}}$ and $\mathbf{\hat{n}}$ be the usual frame of unit normal and tangent vectors to the curve,
\begin{equation}\nonumber
\mathbf{\hat{t}}=\frac{(x_{\alpha},y_{\alpha})}{s_{\alpha}},
\qquad
\mathbf{\hat{n}}=\frac{(-y_{\alpha},x_{\alpha})}{s_{\alpha}},
\end{equation}
where the arclength element $s_{\alpha}$ is given by $s_{\alpha}=\sqrt{x_{\alpha}^{2}+y_{\alpha}^{2}}.$
We wish to take the limit of \eqref{newVelocityForm} as $z$ approaches $\mathcal{C},$ and so we take the limit of each of 
$I_{0},$ $I_{1},$ $I_{2},$ and $I_{3}.$  Since the integrals $I_{2}$ and $I_{3}$ are non-singular, the limit of these is just
$W_{2}$ and $W_{3},$ when $z\rightarrow\xi(\alpha).$  For $I_{0}$ and $I_{1}$ we are able to 
apply the Plemelj formulas (see \cite{plemelj}, and see also \cite{bakerMeironOrszag}) 
when taking the limit as $z$ approaches $\mathcal{C}.$  This is just as in the usual decaying case of vortex sheets, and we find
\begin{equation}\label{limitingBehavior}
(u,v)\rightarrow W \pm \frac{\gamma}{2s_{\alpha}}\mathbf{\hat{t}},
\end{equation}
where the plus or minus depends on whether the interface is approached from above or below.
We see from \eqref{limitingBehavior} that there is no jump in normal velocity across the vortex sheet, while the jump
in tangential velocity is given by the vortex sheet strength, $\gamma.$

\subsection{Limit at vertical infinity}\label{verticalInfinitySection}

It is of interest for a vortex sheet to compute the limiting velocity at vertical infinity, i.e. to take the limit of \eqref{newVelocityForm} as $y\rightarrow\pm\infty.$
It is immediate to see that as $y\rightarrow\pm\infty,$ $I_{1}$ and $I_{2}$ become identically zero because of the presence of the cutoff function.
Thus we only must work to take the limit of $I_{0}$ and $I_{3}.$  In fact, only $I_{0}$ makes a nonzero contribution to the limit.
We may write $I_{0}$ as
\begin{equation}\nonumber
I_{0}=\frac{c[\gamma]}{2\pi i}\mathrm{PV}_{\infty}\int_{-\infty}^{\infty}\frac{1}{z-\alpha'}\ \mathrm{d}\alpha'
+\frac{c[\gamma]}{2\pi i}\mathrm{PV}_{\infty}\int_{-\infty}^{\infty}\frac{\xi(\alpha')-\alpha'}{(z-\alpha')(z-\xi(\alpha'))}\ \mathrm{d}\alpha'.
\end{equation}
The first term on the right-hand side has limit $\mp\frac{c[\gamma]}{2}$ as $y\rightarrow\pm\infty;$ notice this integral is now in the decaying case, and the calculation
follows from being able to exactly evaluate the integral.  The second integral on the right-hand side has limit zero at vertical infinity, which can be seen because the
denominator is now quadratic with respect to the $z$ variable.  Similarly, the limit of $I_{3}$ as $y\rightarrow\pm\infty$ is zero, because the denominator is quadratic
with respect to $z.$   We conclude that
\begin{equation}\nonumber
\lim_{y\rightarrow\pm\infty}(u,v)(x,y)=\mp\frac{c[\gamma]}{2},
\end{equation}
which is the same limit as in the periodic and decaying cases.

\section{Revisiting the examples}\label{examplesAgain}
 In this section we consider again the four illustrative examples of curves $\mathcal{C}$ and vortex sheet strengths $\gamma$ which we introduced
 in Section \ref{examplesSection} above.  We will see that for all of these examples, all the needed assumptions are satisfied.
 
In each of the four examples of Section \ref{examplesSection}, we let
the approximate curve $\mathcal{C}_{n}$ be given by its parameterization $\xi^{n}(\alpha)=\alpha+(\xi(\alpha)-\alpha))\phi_{n}(\alpha).$ 
To see that Theorem \ref{mainTheorem} applies in each of the four examples, we must verify Assumption \ref{assumptionA} and 
Assumption \ref{assumptionC}, and either we must have that $c[\gamma]=0$ or we must verify Assumption \ref{assumptionB}.  Furthermore, the bullet-point list
of conditions on the parameterization $\xi$ which appears in the theorem must be verified.  Most of these points to verify are obvious and will not be commented 
on explicitly.  Below we give some further details.

\begin{description}
\item[(a) A mixed case.]\ We remark that the vortex sheet strength $\gamma(\alpha)=\left(\frac{1}{1+\alpha^{2}}\right)^{2/5}$ satisfies Assumption
\ref{assumptionA} with $c[\gamma]=0$ and $\beta=1/5.$  Since $c[\gamma]=0,$ we do not need to consider Assumption \ref{assumptionB}.
We only need to verify the condition on the integral in \eqref{integralToBeBounded}.  If we call the integral $\mathcal{I}_{n}(z),$ we have the estimate, for some
$c$ independent of $n$ and $z,$
\begin{equation}\nonumber
|\mathcal{I}_{n}(z)|\leq c\int_{\Omega_{n}} \frac{\alpha'^{1/5}}{(x-\xi^{n}_{1}(\alpha'))^{2}+(y-\xi^{n}_{2}(\alpha'))^{2}}\ \mathrm{d}\alpha'.
\end{equation}
We have written here $z=x+iy.$  We let $x_{*+}$ be the largest $x$-value in the compact set $K_{0},$ and we let $x_{*-}$ be the smallest $x$-value in $K_{0}.$
We can use our specific parameterization,
namely $\xi^{n}_{1}(\alpha)=\alpha,$  and we neglect the term $(y-\xi^{n}_{2}(\alpha'))^{2}$ in the denominator.
We identify the values $\alpha_{n}=n$ and $\alpha_{-n}=-n.$
 These considerations lead to the bound
\begin{equation}\nonumber
|\mathcal{I}_{n}(z)|\leq c\int_{-\infty}^{-n}\frac{\alpha'^{1/5}}{(x_{*-}-\alpha')^{2}}\ \mathrm{d}\alpha'
+c\int_{n}^{\infty}\frac{\alpha'^{1/5}}{(x_{*+}-\alpha')^{2}}\ \mathrm{d}\alpha'.
\end{equation}
Note that for sufficiently large $n,$ it is not possible for the denominators of the integrands to equal zero.
These integrals do converge to zero as $n$ goes to infinity, and this is clearly uniform with respect to $z.$

\

\item[(b) A quasiperiodic case.]\  The vortex sheet strength $\gamma(\alpha)=\cos\left(\left(1+4\pi-\sqrt{2}\right)\alpha\right)$ satisfies Assumption
\ref{assumptionA} with $c[\gamma]=0$ and $\beta=0.$  Since $c[\gamma]=0,$ we do not need to consider Assumption \ref{assumptionB}.
In this case we need to pay special attention to the chord-arc condition \eqref{chordArc}, for $\alpha$ and $\alpha'$ close to each other.  It can
be seen that the chord-arc quantity (i.e. the divided difference $q_{1}[\xi]$) is equal to $1+O(\mu),$ for $\alpha-\alpha'$ small.  Thus, for $\mu$ sufficiently
small, the chord-arc condition is satisfied and the curve is non-self-intersecting.  The other properties are similar to those in example (a).

\

\item[(c) A uniformly local case.]\  The vortex sheet strength $\gamma(\alpha)=\frac{\alpha^{2}}{1+\alpha^{2}}$ satsifes Assumption \ref{assumptionA}
with $c[\gamma]=1$ and $\beta=0.$  We next consider Assumption \ref{assumptionB}, and we argue as in the proof of Lemma \ref{assumptionBLemma}.
We follow the proof of the lemma until \eqref{toVerifyAssumptionB1}; then, it is clear that the integral on the right-hand side of \eqref{toVerifyAssumptionB1}
converges with $\xi(\alpha)=\alpha+i\sin\left(\left(\frac{\alpha^{4}}{1+\alpha^{2}}\right)^{1/4}\right),$ since the numerator of the second factor is bounded and the 
denominator is of the order $\alpha'^{2}.$  To bound this, we must also use the fact that the parameterization $\xi$ satisfies the chord-arc condition \eqref{chordArc},
which it does.  We then continue to follow the proof of the lemma until \eqref{toVerifyAssumptionB2}, and we again note that the integral on the right-hand side
converges since the numerator is bounded and the denominator is of the order $\alpha'^{2}.$  The other properties are again similar to those in example (a).

\

\item[(d) A bore.]\ The vortex sheet strength $\gamma(\alpha)=1+\sin(\alpha)$ satisfies Assumption \ref{assumptionA} with $c[\gamma]=1$ and
$\beta=0.$  That Assumption \ref{assumptionB} holds is verified in the same manner as in example (c), and the remaining properties are similar to those in
example (a).  

While in the case of the bore we took the location of the vortex sheet to have different limits at positive and negative infinity, it is worth noting that the same
cannot be done for the vortex sheet strength.  If, say, $\gamma$ approached $-1$ at negative infinity and $1$ at positive infinity, then we would not be able
to satisfy Assumption \ref{assumptionA}.

\end{description}

\section{Discussion}\label{discussionSection}
In this work, we have developed a more general expression for the velocity field associated to a vortex sheet than the one
found by directly using the vortex sheet ansatz for vorticity in the Biot-Savart law.  Taking the limit of this velocity field as the 
vortex sheet is approached yields a new, more general expression for the Birkhoff-Rott integral.  The Birkhoff-Rott integral has the advantages
that it is useful for analytical and numerical studies of the motion of vortex sheets (including water waves) in situations in which the
sheet is overturned, i.e. has multi-valued height, and also that can be used in this way for both two-dimensional and three-dimensional
fluid flows.  While we have only treated the case of two-dimensional fluids here, we expect that this work can be carried over to the three-dimensional
case.  A development of the usual Birkhoff-Rott integral for three-dimensional fluids may be found in \cite{caflischLi}.

In future work, we expect to categorize the set of interface locations for which Assumption \ref{assumptionC} on approximability by decaying
sheets holds.  We also expect to 
use the new expression for the Birkhoff-Rott integral to prove well-posedness theory for 
vortex sheets and water waves in the non-decaying, non-periodic context.  The author and Masmoudi have previously made such studies
in two-dimensional fluids in the periodic setting (\cite{ambroseThesis}, \cite{ambroseMasmoudi1}), and for three-dimensional fluids
in the decaying setting (\cite{ambroseMasmoudi2}, \cite{ambroseMasmoudi3}).  In these works a key ingredient has been approximation
of the Birkhoff-Rott integral by the Hilbert transform or by Riesz transforms; this approach has also been employed in similar works
such as \cite{BHLGrowthRates} or \cite{cordoba}.  If a function $f$ satisfies Assumption \ref{assumptionA},
we may define its Hilbert transform as
\begin{multline}\label{newHilbert}
H(f)(\alpha)=\frac{1}{\pi}\mathrm{PV}\int_{-\infty}^{\infty}\frac{a(\alpha-\alpha')f(\alpha')}{\alpha-\alpha'}\ \mathrm{d}\alpha'
-\frac{1}{\pi}\int_{-\infty}^{\infty}\frac{a'(\alpha-\alpha')F(\alpha')}{\alpha-\alpha'}\ \mathrm{d}\alpha'
\\
-\frac{1}{\pi}\int_{-\infty}^{\infty}\frac{(1-a(\alpha-\alpha'))F(\alpha')}{(\alpha-\alpha')^{2}}\ \mathrm{d}\alpha'.
\end{multline}
Here, $F$ is the antiderivative of $f-c[f]$ discussed in Assumption \ref{assumptionA}.
The formula \eqref{newHilbert} is developed analogously to the development of the new Birkhoff-Rott integral.  Just as in that case, this is
a single new formula for the Hilbert transform which unifies the decaying and periodic settings while extending its reach to 
additional functions.  Of course, the Hilbert transform has additional properties such as connections with analytic functions, 
and we will leave it to future work to develop such theory for this new version of the Hilbert transform.  This future work will 
also prioritize demonstrating the necessary estimates for approximation of the new Birkhoff-Rott integral by the new 
Hilbert transform.

\appendix

\section{Proofs of lemmas}\label{proofsSection}

In this appendix. we give the proofs of two lemmas.  First we have the proof of Lemma \ref{assumptionASatisfiedLemma}, about Assumption \ref{assumptionA}.

\begin{proof}[{\bf Proof of Lemma \ref{assumptionASatisfiedLemma}}] Let $\gamma\in L^{2}(\mathbb{R}).$  We define $\Gamma$ by
\begin{equation}\nonumber
\Gamma(\alpha)=\int_{0}^{\alpha}\gamma(\alpha')\ \mathrm{d}\alpha'.
\end{equation}
Then by H\"{o}lder's inequality (\cite{kreyszig}), we have for any $\alpha,$
\begin{equation}\nonumber
|\Gamma(\alpha)|\leq \left(\int_{0}^{\alpha}1^{2}\ \mathrm{d}\alpha'\right)^{1/2}\left(\int_{0}^{\alpha}\gamma^{2}(\alpha')\ \mathrm{d}\alpha'\right)^{1/2}
\leq \sqrt{\alpha}\|\gamma\|_{L^{2}(\mathbb{R})}.
\end{equation}
We conclude that Assumption \ref{assumptionA} is satisfied with $C=\|\gamma\|_{L^{2}(\mathbb{R})}$ and $\beta=1/2,$
and with constant $c[\gamma]=0.$

Now assume instead that $\gamma\in L^{2}(\mathbb{T}).$  Then we may express $\gamma$ by its Fourier series,
\begin{equation}\nonumber
\gamma(\alpha)=\sum_{k\in\mathbb{Z}}\hat{\gamma}(k)e^{ik\alpha}.
\end{equation}
(Here we have again set the periodicity length to be $2\pi,$ but this is not necessary.)  Then we may identify the constant
$c[\gamma]=\hat{\gamma}(0).$  The antiderivative $\Gamma$ may be taken to be 
\begin{equation}\nonumber
\Gamma(\alpha)=\sum_{k\in\mathbb{Z}\setminus\{0\}}\frac{\hat{\gamma}(k)}{ik}e^{ik\alpha}.
\end{equation}
Since $\gamma$ is in $L^{2}(\mathbb{T}),$ by Plancherel's theorem we may conclude $\Gamma\in L^{2}(\mathbb{T})$ as well:
\begin{equation}
\|\Gamma\|_{L^{2}(\mathbb{T})}=\sum_{k\in\mathbb{Z}\setminus\{0\}}\frac{|\hat{\gamma}(k)|^{2}}{k^{2}}
\leq\sum_{k\in\mathbb{Z}}|\hat{\gamma}(k)|^{2}=\|\gamma\|_{L^{2}(\mathbb{T})}.
\end{equation}
By Sobolev embedding (\cite{adams}), this implies $\Gamma\in L^{\infty}.$
We conclude that Assumption \ref{assumptionA} is satisfied, with $\beta=0.$
\end{proof}

Finally, we have the proof of Lemma \ref{assumptionBLemma}, about Assumption \ref{assumptionB}.

\begin{proof}[{\bf Proof of Lemma \ref{assumptionBLemma}}] 
We begin by considering item (i) of Assumption \ref{assumptionB}.
We decompose the integral as $N+F,$ (standing for the 
``near'' part and the ``far'' part), with
\begin{equation}\nonumber
N=\mathrm{PV}\int_{\alpha-1}^{\alpha+1}\frac{1}{\xi(\alpha)-\xi(\alpha')}\ \mathrm{d}\alpha',
\end{equation}
\begin{equation}\nonumber
F=\mathrm{PV}_{\infty}\left(\int_{-\infty}^{\alpha-1}\frac{1}{\xi(\alpha)-\xi(\alpha')}\ \mathrm{d}\alpha'
+\int_{\alpha+1}^{\infty}\frac{1}{\xi(\alpha)-\xi(\alpha')}\ \mathrm{d}\alpha'\right).
\end{equation}

We first deal with $N.$  Adding and subtracting, we can write this as
\begin{equation}\label{lastN}
N=\mathrm{PV}\int_{\alpha-1}^{\alpha+1}\frac{1}{\xi_{\alpha}(\alpha')(\alpha-\alpha')}\ \mathrm{d}\alpha'
+\int_{\alpha-1}^{\alpha+1}\left[\frac{1}{\xi(\alpha)-\xi(\alpha')}-\frac{1}{\xi_{\alpha}(\alpha')(\alpha-\alpha')}\right]\ \mathrm{d}\alpha'.
\end{equation}
The first integral on the right-hand side converges, as this is a localized Hilbert transform.  For the second integral on the right-hand side,
it is the integral of a ratio of divided differences.  That is, we let
\begin{equation}\nonumber
q_{1}[\xi](\alpha,\alpha')=\frac{\xi(\alpha)-\xi(\alpha')}{\alpha-\alpha},
\qquad
q_{2}[\xi](\alpha,\alpha')=\frac{\xi(\alpha)-\xi(\alpha')-\xi_{\alpha}(\alpha')(\alpha-\alpha')}{(\alpha-\alpha')^{2}};
\end{equation}
these are known as the first and second divided differences for $z.$  Then the second integral on the right-hand side of \eqref{lastN}
is the integral of $\frac{q_{2}[\xi](\alpha,\alpha')}{\xi_{\alpha}(\alpha')q_{1}[\xi](\alpha,\alpha')}.$  By standard results on divided differences
(as in e.g. \cite{BHLGrowthRates}), if $\xi$ is twice differentiable then $q_{2}$ is locally integrable.
Using together our assumption that $\xi$ is twice differentiable, the chord-arc condition 
\eqref{chordArc} and the stated assumption on the parameterization that $\xi_{\alpha}$ is bounded away from zero, we may conclude that
the integral $N$ converges.

We next consider $F.$  In each of the integrands in $F,$ we subtract $\frac{1}{\alpha-\alpha'}:$
\begin{multline}\nonumber
F=\mathrm{PV}_{\infty}\Bigg(\int_{-\infty}^{\alpha-1}\frac{1}{\alpha-\alpha'}\ \mathrm{d}\alpha'
+\int_{-\infty}^{\alpha-1}\frac{1}{\xi(\alpha)-\xi(\alpha')}-\frac{1}{\alpha-\alpha'}\ \mathrm{d}\alpha'
\\
+\int_{\alpha+1}^{\infty}\frac{1}{\alpha-\alpha'}\ \mathrm{d}\alpha'
+\int_{\alpha+1}^{\infty}\frac{1}{\xi(\alpha)-\xi(\alpha')}-\frac{1}{\alpha-\alpha'}\ \mathrm{d}\alpha'
\Bigg).
\end{multline}
We can clearly compute part of this principal value:
\begin{equation}\nonumber
\mathrm{PV}_{\infty}\left(\int_{-\infty}^{\alpha-1}\frac{1}{\alpha-\alpha'}\ \mathrm{d}\alpha'
+\int_{\alpha+1}^{\infty}\frac{1}{\alpha-\alpha'}\ \mathrm{d}\alpha'\right)=0.
\end{equation}
This leaves us with the following formula for $F:$
\begin{equation}\nonumber
F=\mathrm{PV}_{\infty}\Bigg(\int_{-\infty}^{\alpha-1}\frac{1}{\xi(\alpha)-\xi(\alpha')}-\frac{1}{\alpha-\alpha'}\ \mathrm{d}\alpha'
+\int_{\alpha+1}^{\infty}\frac{1}{\xi(\alpha)-\xi(\alpha')}-\frac{1}{\alpha-\alpha'}\ \mathrm{d}\alpha'\Bigg).
\end{equation}
This is no longer, in fact, a principal value integral at infinity, however, since these integrals converge individually.
To see this, we arrange the second integral on the right-hand side as
\begin{equation}\label{toVerifyAssumptionB1}
\int_{\alpha+1}^{\infty}\frac{1}{\xi(\alpha)-\xi(\alpha')}-\frac{1}{\alpha-\alpha'}\ \mathrm{d}\alpha'
=-\int_{\alpha+1}^{\infty}\frac{\alpha-\alpha'}{\xi(\alpha)-\xi(\alpha')}
\frac{(\xi(\alpha)-\alpha)-(\xi(\alpha')-\alpha')}{(\alpha-\alpha')^{2}}
\ \mathrm{d}\alpha'.
\end{equation}
We then decompose this into two integrals, as
\begin{multline}\nonumber
\int_{\alpha+1}^{\infty}\frac{1}{\xi(\alpha)-\xi(\alpha')}-\frac{1}{\alpha-\alpha'}\ \mathrm{d}\alpha'
=-(\xi(\alpha)-\alpha)\int_{\alpha+1}^{\infty}\frac{\alpha-\alpha'}{\xi(\alpha)-\xi(\alpha')}\frac{1}{(\alpha-\alpha')^{2}}\ \mathrm{d}\alpha'
\\
+\int_{\alpha+1}^{\infty}\frac{\alpha-\alpha'}{\xi(\alpha)-\xi(\alpha')}\frac{\xi(\alpha')-\alpha'}{(\alpha-\alpha')^{2}}\ \mathrm{d}\alpha'.
\end{multline}
That the first integral on the right-hand side converges follows from the chord-arc condition \eqref{chordArc} and from the fact that 
$(\alpha-\alpha')^{-2}$ is integrable on the relevant domain.  That the second integral on the right-hand side converges follows
from these facts plus the regularity of $\xi(\alpha')-\alpha'.$  The other integral comprising $F$ may be treated in just the same
way, and the conclusion for the case $\xi(\alpha)-\alpha$ is decaying now follows.

For the case that $\xi(\alpha)-\alpha$ is periodic, the principal value at positive and negative infinity is implicit in the 
cotangent summation formula discussed above.  For the principal value at $\alpha'=\alpha,$ this is essentially the same as
in the real-line case, and we do not provide further details (although the interested reader could consult \cite{ambroseThesis}).

We now begin to prove (ii).
We write the point $z$ as $z=x+iy,$ and we write the integral as
\begin{equation}\nonumber
\mathrm{PV}_{\infty}\int_{-\infty}^{\infty}\frac{1}{z-\xi(\alpha')}\ \mathrm{d}\alpha'
=\mathrm{PV}_{\infty}\left(\int_{-\infty}^{x}\frac{1}{z-\xi(\alpha')}\ \mathrm{d}\alpha'
+\int_{x}^{\infty}\frac{1}{z-\xi(\alpha')}\ \mathrm{d}\alpha'\right).
\end{equation}
We add and subtract $\frac{1}{z-\alpha'}$ in each integral, arriving at
\begin{multline}\nonumber
\mathrm{PV}_{\infty}\int_{-\infty}^{\infty}\frac{1}{z-\xi(\alpha')}\ \mathrm{d}\alpha'
=\mathrm{PV}_{\infty}\left(\int_{-\infty}^{x}\frac{1}{z-\alpha'}\ \mathrm{d}\alpha'
+\int_{x}^{\infty}\frac{1}{z-\alpha'}\ \mathrm{d}\alpha'\right)
\\
+\int_{-\infty}^{\infty}\frac{1}{z-\xi(\alpha')}-\frac{1}{z-\alpha'}\ \mathrm{d}\alpha'.
\end{multline}
Notice that the final integral on the right-hand side is no longer a principal value (we will justify this shortly).
Also, we can compute that the first integral on the right-hand side is zero:
\begin{equation}\nonumber
\mathrm{PV}_{\infty}\left(\int_{-\infty}^{x}\frac{1}{z-\alpha'}\ \mathrm{d}\alpha'+\int_{x}^{\infty}\frac{1}{z-\alpha'}\ \mathrm{d}\alpha'\right)=0.
\end{equation}
For the remaining integral, we add the fractions, finding
\begin{equation}\label{toVerifyAssumptionB2}
\mathrm{PV}_{\infty}\int_{-\infty}^{\infty}\frac{1}{z-\xi(\alpha')}\ \mathrm{d}\alpha'
=\int_{-\infty}^{\infty}\frac{\xi(\alpha')-\alpha'}{(z-\xi(\alpha'))(z-\alpha')}\ \mathrm{d}\alpha'.
\end{equation}
In this final integral, $\frac{1}{z-\xi(\alpha')}$ is bounded away from zero since $z$ is a positive distance from $\mathcal{C}.$
If $\xi(\alpha')-\alpha'$ is in $L^{2}(\mathbb{R}),$ then since the function $\frac{1}{z-\alpha'}$ is also in $L^{2}(\mathbb{R}),$ this 
integral converges.  This completes
the proof in the case that $\xi(\alpha)-\alpha$ is in $L^{2}(\mathbb{R}).$

In the periodic case, we just need to add and subtract one more time.  We have
\begin{multline}\nonumber
\mathrm{PV}_{\infty}\int_{-\infty}^{\infty}\frac{1}{z-\xi(\alpha')}\ \mathrm{d}\alpha'
=\int_{-\infty}^{\infty}\frac{\xi(\alpha')-\alpha'}{(z-\alpha')^{2}}\ \mathrm{d}\alpha'
\\
+\int_{-\infty}^{\infty}\frac{\xi(\alpha')-\alpha'}{z-\alpha'}\left[\frac{1}{z-\xi(\alpha')}-\frac{1}{z-\alpha'}\right]\ \mathrm{d}\alpha'.
\\
=\int_{-\infty}^{\infty}\frac{\xi(\alpha')-\alpha'}{(z-\alpha')^{2}}\ \mathrm{d}\alpha'
+\int_{-\infty}^{\infty}\frac{(\xi(\alpha')-\alpha')^{2}}{(z-\xi(\alpha'))(z-\alpha')^{2}}\ \mathrm{d}\alpha'.
\end{multline}
Again, the factor
$\frac{1}{z-\xi(\alpha')}$ is bounded above since $z$ is a positive distance from $\mathcal{C}.$  Finally, the factor
$\frac{1}{(z-\alpha')^{2}}$ is integrable on $\mathbb{R}.$  Since we have assumed that $\xi(\alpha)-\alpha$ is bounded, 
this completes the proof in the periodic case.
\end{proof}

\noindent{\bf Acknowledgements}: {The author wishes to thank Mark Hoefer for suggesting a bore as an example.  The author thanks Jon Wilkening for 
helpful discussions.  The author thanks Tom Beale for comments on an early version of this manuscript.  The author thanks Michael Siegel for 
extensive comments on an early version of this manuscript.}\\

\noindent{\bf Funding}: {The author is grateful to the National Science Foundation for support through grant DMS-2307638.}\\

\noindent{\bf Declaration of interests}: {The author reports no conflict of interest.}\\

\noindent{\bf Author ORCIDs}: {D.M. Ambrose, https://orcid.org/0000-0003-4753-0319.}

\bibliographystyle{jfm}

\bibliography{nondecayingBirkhoff-Rott.bib}{}

\begin{thebibliography}{29}
\expandafter\ifx\csname natexlab\endcsname\relax\def\natexlab#1{#1}\fi
\def\au#1{#1} \def\ed#1{#1} \def\yr#1{#1}\def\at#1{#1}\def\jt#1{\textit{#1}}
  \def\bt#1{#1}\def\bvol#1{\textbf{#1}} \def\vol#1{#1} \def\pg#1{#1}
  \def\publ#1{#1}\def\arxiv#1{#1}\def\org#1{#1}\def\st#1{\textit{#1}}

\bibitem[Ablowitz \& Fokas(1997)]{ablowitzFokas}
{\sc \au{Ablowitz, M.J.} \& \au{Fokas, A.S.}} \yr{1997} {\em Complex variables:
  introduction and applications\/}.  \publ{Cambridge University Press,
  Cambridge}.

\bibitem[Adams \& Fournier(2003)]{adams}
{\sc \au{Adams, R.A.} \& \au{Fournier, John~J.F.}} \yr{2003} {\em Sobolev
  spaces\/}, 2nd edn.,  \st{Pure and Applied Mathematics (Amsterdam)},
  \vol{vol. 140}.  \publ{Elsevier/Academic Press, Amsterdam}.

\bibitem[Alazard {\em et~al.\/}(2016)Alazard, Burq \& Zuily]{alazardEtAl}
{\sc \au{Alazard, T.}, \au{Burq, N.} \& \au{Zuily, C.}} \yr{2016}  \at{Cauchy
  theory for the gravity water waves system with non-localized initial data}.
  \jt{Ann. Inst. H. Poincar\'{e} C Anal. Non Lin\'{e}aire}  \bvol{33}~(2),
  \pg{337--395}.

\bibitem[Ambrose(2003)]{ambroseThesis}
{\sc \au{Ambrose, D.M.}} \yr{2003}  \at{Well-posedness of vortex sheets with
  surface tension}.  \jt{SIAM J. Math. Anal.}  \bvol{35}~(1),  \pg{211--244}.

\bibitem[Ambrose {\em et~al.\/}(2023)Ambrose, Cozzi, Erickson \&
  Kelliher]{ACEK}
{\sc \au{Ambrose, D.M.}, \au{Cozzi, E.}, \au{Erickson, D.} \& \au{Kelliher,
  J.P.}} \yr{2023}  \at{Existence of solutions to fluid equations in
  {H}\"{o}lder and uniformly local {S}obolev spaces}.  \jt{J. Differential
  Equations}  \bvol{364},  \pg{107--151}.

\bibitem[Ambrose {\em et~al.\/}(2015)Ambrose, Kelliher, Lopes~Filho \&
  Nussenzveig~Lopes]{AKLN}
{\sc \au{Ambrose, D.M.}, \au{Kelliher, J.P.}, \au{Lopes~Filho, M.C.} \&
  \au{Nussenzveig~Lopes, H.J.}} \yr{2015}  \at{Serfati solutions to the 2{D}
  {E}uler equations on exterior domains}.  \jt{J. Differential Equations}
  \bvol{259}~(9),  \pg{4509--4560}.

\bibitem[Ambrose \& Masmoudi(2005)]{ambroseMasmoudi1}
{\sc \au{Ambrose, D.M.} \& \au{Masmoudi, N.}} \yr{2005}  \at{The zero surface
  tension limit of two-dimensional water waves}.  \jt{Comm. Pure Appl. Math.}
  \bvol{58}~(10),  \pg{1287--1315}.

\bibitem[Ambrose \& Masmoudi(2007)]{ambroseMasmoudi2}
{\sc \au{Ambrose, D.M.} \& \au{Masmoudi, N.}} \yr{2007}  \at{Well-posedness of
  3{D} vortex sheets with surface tension}.  \jt{Commun. Math. Sci.}
  \bvol{5}~(2),  \pg{391--430}.

\bibitem[Ambrose \& Masmoudi(2009)]{ambroseMasmoudi3}
{\sc \au{Ambrose, D.M.} \& \au{Masmoudi, N.}} \yr{2009}  \at{The zero surface
  tension limit of three-dimensional water waves}.  \jt{Indiana Univ. Math. J.}
   \bvol{58}~(2),  \pg{479--521}.

\bibitem[Baker {\em et~al.\/}(1982)Baker, Meiron \& Orszag]{bakerMeironOrszag}
{\sc \au{Baker, G.R.}, \au{Meiron, D.I.} \& \au{Orszag, S.A.}} \yr{1982}
  \at{Generalized vortex methods for free-surface flow problems}.  \jt{J. Fluid
  Mech.}  \bvol{123},  \pg{477--501}.

\bibitem[Beale(1991)]{beale}
{\sc \au{Beale, J.T.}} \yr{1991}  \at{Exact solitary water waves with capillary
  ripples at infinity}.  \jt{Comm. Pure Appl. Math.}  \bvol{44}~(2),
  \pg{211--257}.

\bibitem[Beale {\em et~al.\/}(1993)Beale, Hou \& Lowengrub]{BHLGrowthRates}
{\sc \au{Beale, J.T.}, \au{Hou, T.Y.} \& \au{Lowengrub, J.S.}} \yr{1993}
  \at{Growth rates for the linearized motion of fluid interfaces away from
  equilibrium}.  \jt{Comm. Pure Appl. Math.}  \bvol{46}~(9),  \pg{1269--1301}.

\bibitem[Birkhoff(1962)]{birkhoff}
{\sc \au{Birkhoff, G.}} \yr{1962} Helmholtz and {T}aylor instability.  \bt{In
  {\em Proc. {S}ympos. {A}ppl. {M}ath., {V}ol. {XIII}\/}},  \pg{pp. 55--76}.
  \publ{Amer. Math. Soc., Providence, RI}.

\bibitem[Caflisch \& Li(1992)]{caflischLi}
{\sc \au{Caflisch, R.E.} \& \au{Li, X.-F.}} \yr{1992} Lagrangian theory for
  {$3$}{D} vortex sheets with axial or helical symmetry.  \bt{In {\em
  Proceedings of the {F}ourth {I}nternational {W}orkshop on {M}athematical
  {A}spects of {F}luid and {P}lasma {D}ynamics ({K}yoto, 1991)\/}}, ,
  \vol{vol.~21},  \pg{pp. 559--578}.

\bibitem[C\'{o}rdoba {\em et~al.\/}(2010)C\'{o}rdoba, C\'{o}rdoba \&
  Gancedo]{cordoba}
{\sc \au{C\'{o}rdoba, A.}, \au{C\'{o}rdoba, D.} \& \au{Gancedo, F.}} \yr{2010}
  \at{Interface evolution: water waves in 2-{D}}.  \jt{Adv. Math.}
  \bvol{223}~(1),  \pg{120--173}.

\bibitem[Craig \& Sulem(1993)]{craigSulem}
{\sc \au{Craig, W.} \& \au{Sulem, C.}} \yr{1993}  \at{Numerical simulation of
  gravity waves}.  \jt{J. Comput. Phys.}  \bvol{108}~(1),  \pg{73--83}.

\bibitem[Dyachenko \& Semenova(2023)]{dyachenkoSemenova}
{\sc \au{Dyachenko, S.A.} \& \au{Semenova, A.}} \yr{2023}  \at{Quasiperiodic
  perturbations of {S}tokes waves: {S}econdary bifurcations and stability}.
  \jt{J. Comput. Phys.}  \bvol{492},  \pg{Paper No. 112411}.

\bibitem[Kreyszig(1989)]{kreyszig}
{\sc \au{Kreyszig, E.}} \yr{1989} {\em Introductory functional analysis with
  applications\/}.  \publ{John Wiley \& Sons, Inc., New York}.

\bibitem[Majda \& Bertozzi(2002)]{majdaBertozzi}
{\sc \au{Majda, A.J.} \& \au{Bertozzi, A.L.}} \yr{2002} {\em Vorticity and
  incompressible flow\/},  \st{Cambridge Texts in Applied Mathematics},
  \vol{vol.~27}.  \publ{Cambridge: Cambridge University Press}.

\bibitem[Muskhelishvili(1992)]{plemelj}
{\sc \au{Muskhelishvili, N.I.}} \yr{1992} {\em Singular integral equations\/}.
  \publ{Dover Publications, Inc., New York}, boundary problems of function
  theory and their application to mathematical physics, Translated from the
  second (1946) Russian edition and with a preface by J. R. M. Radok, Corrected
  reprint of the 1953 English translation.

\bibitem[Rott(1956)]{rott}
{\sc \au{Rott, N.}} \yr{1956}  \at{Diffraction of a weak shock with vortex
  generation}.  \jt{J. Fluid Mech.}  \bvol{1},  \pg{111--128 (1 plate)}.

\bibitem[Saffman(1992)]{saffmanBook}
{\sc \au{Saffman, P.G.}} \yr{1992} {\em Vortex dynamics\/}.  \publ{Cambridge
  University Press, New York}.

\bibitem[Serfati(1995)]{serfati}
{\sc \au{Serfati, P.}} \yr{1995}  \at{Solutions {$C^\infty$} en temps,
  {$n$}-{$\log$} {L}ipschitz born\'{e}es en espace et \'{e}quation d'{E}uler}.
  \jt{C. R. Acad. Sci. Paris S\'{e}r. I Math.}  \bvol{320}~(5),  \pg{555--558}.

\bibitem[Sun(1991)]{sun}
{\sc \au{Sun, S.M.}} \yr{1991}  \at{Existence of a generalized solitary wave
  solution for water with positive {B}ond number less than {$1/3$}}.  \jt{J.
  Math. Anal. Appl.}  \bvol{156}~(2),  \pg{471--504}.

\bibitem[Wilkening \& Zhao(2021{\natexlab{{\em a\/}}})]{wilkeningZhao1}
{\sc \au{Wilkening, J.} \& \au{Zhao, X.}} \yr{2021{\natexlab{{\em a\/}}}}
  \at{Quasi-periodic travelling gravity--capillary waves}.  \jt{J. Fluid Mech.}
   \bvol{915},  \pg{Paper No. A7, 35}.

\bibitem[Wilkening \& Zhao(2021{\natexlab{{\em b\/}}})]{wilkeningZhao2}
{\sc \au{Wilkening, J.} \& \au{Zhao, X.}} \yr{2021{\natexlab{{\em b\/}}}}
  \at{Spatially quasi-periodic water waves of infinite depth}.  \jt{J.
  Nonlinear Sci.}  \bvol{31}~(3),  \pg{Paper No. 52, 43}.

\bibitem[Wilkening \& Zhao(2023{\natexlab{{\em a\/}}})]{wilkeningZhao3}
{\sc \au{Wilkening, J.} \& \au{Zhao, X.}} \yr{2023{\natexlab{{\em a\/}}}}
  \at{Spatially quasi-periodic bifurcations from periodic traveling water waves
  and a method for detecting bifurcations using signed singular values}.
  \jt{J. Comput. Phys.}  \bvol{478},  \pg{Paper No. 111954, 34}.

\bibitem[Wilkening \& Zhao(2023{\natexlab{{\em b\/}}})]{wilkeningZhao4}
{\sc \au{Wilkening, J.} \& \au{Zhao, X.}} \yr{2023{\natexlab{{\em b\/}}}}
  \at{Spatially quasi-periodic water waves of finite depth}.  \jt{Proc. A.}
  \bvol{479}~(2272),  \pg{Paper No. 20230019, 28}.

\bibitem[Wu(1997)]{wu2D}
{\sc \au{Wu, S.}} \yr{1997}  \at{Well-posedness in {S}obolev spaces of the full
  water wave problem in {$2$}-{D}}.  \jt{Invent. Math.}  \bvol{130}~(1),
  \pg{39--72}.

\end{thebibliography}

\end{document}